\documentclass[letterpaper,11pt]{article}

\usepackage{latexsym}
\usepackage[ruled,vlined,titlenotnumbered]{algorithm2e}
\usepackage{verbatim}

\usepackage{amsmath}
\usepackage{amssymb}
\usepackage{amsfonts}

\usepackage{graphics,graphicx}
\usepackage[usenames]{color}

\newcommand{\BLUE}[1]{{#1}}

\setlength{\textwidth}{6.5in}
\setlength{\evensidemargin}{-0.1in}
\setlength{\oddsidemargin}{-0.1in}
\setlength{\topmargin}{-0.6in}
\setlength{\textheight}{9.3in}
\setlength{\textfloatsep}{18pt}

\setlength{\parindent}{0pt}
\setlength{\parskip}{3pt plus 2pt}

\newtheorem{theorem}{Theorem}[section]
\newtheorem{lemma}[theorem]{Lemma}
\newtheorem{invariant}[theorem]{Invariant}
\newtheorem{claim}[theorem]{Claim}

\newtheorem{definition}[theorem]{Definition}

\newenvironment{proof}{{\bf Proof:\ }}{\hfill$\Box$\medskip}

\newcommand{\ignore}[1]{}
\newcommand{\remove}[1]{}
\newcommand{\reals}{\hbox{$\rlap{\rm I} \> \kern-.2mm{\rm R}$}}

\newcommand{\HEAPSELECT}{\mbox{{\tt Heap}-{\tt Select}}}
\newcommand{\SOFTSELECT}{\mbox{{\tt Soft}-{\tt Select}}}
\newcommand{\SOFTSELECTHEAPIFY}{\mbox{{\tt Soft}-{\tt Select}-{\tt Heapify}}}
\newcommand{\MATSELECT}{\mbox{{\tt Mat}-{\tt Select}}}
\newcommand{\XYSELECT}{\mbox{{\tt X+Y}-{\tt Select}}}

\newcommand{\NUMROWS}{\mbox{{\sl num}-{\sl rows}}}
\newcommand{\JUMP}{\mbox{{\sl jump}}}
\newcommand{\SHIFT}{\mbox{{\sl shift}}}

\newcommand{\HEAP}{\mbox{{\tt heap}}}
\newcommand{\SOFTHEAP}{\mbox{{\tt soft}-{\tt heap}}}
\newcommand{\FINDMIN}{\mbox{{\tt find}-{\tt min}}}
\newcommand{\EXTRACTMIN}{\mbox{{\tt extract}-{\tt min}}}

\newcommand{\INSERT}{\mbox{\tt insert}}
\newcommand{\APPEND}{\mbox{\tt append}}
\newcommand{\SELECT}{\mbox{\tt select}}
\newcommand{\DELETE}{\mbox{\tt delete}}
\newcommand{\MELD}{\mbox{\tt meld}}

\newcommand{\LEFT}{\mbox{\sl left}}
\newcommand{\RIGHT}{\mbox{\sl right}}
\newcommand{\MIDDLE}{\mbox{\sl middle}}
\newcommand{\CORRUPT}{\mbox{\sl corrupt}}
\newcommand{\KEY}{\mbox{\sl key}}
\newcommand{\KEYP}{\mbox{\sl key}\,'}
\newcommand{\CHILD}{\mbox{\sl child}}

\newcommand{\NOT}{{\bf not}}

\begin{document}

\title{Selection from heaps, row-sorted matrices and $X+Y$ \\ using soft heaps
}

\author{Haim Kaplan \thanks{Blavatnik School of Computer Science, Tel Aviv University,
  Israel. Research supported by the
Israel Science Foundation grant no.\ 1841-14 and by a grant from the Blavatnik Computer Science Fund.
E-mail: {\tt haimk@post.tau.ac.il}.} \and
L\'{a}szl\'{o} Kozma \thanks{Department of Mathematics and Computer Science, TU Eindhoven, The Netherlands. E-mail:
{\tt Lkozma@gmail.com}.} \and
  Or Zamir \thanks{Blavatnik School of Computer Science, Tel Aviv University,
  Israel. E-mail:
{\tt orzamir@mail.tau.ac.il}.} \and  Uri Zwick \thanks{Blavatnik School of Computer Science, Tel Aviv University,
  Israel.  E-mail: {\tt zwick@tau.ac.il}.}}

\date{}

\maketitle

\begin{abstract}\noindent
We use \emph{soft heaps} to obtain simpler optimal algorithms for selecting the $k$-th smallest
item, and the set of~$k$ smallest items, from a heap-ordered tree, from a collection of sorted lists, and from $X+Y$, where $X$ and $Y$ are two unsorted sets.
Our results match, and in some ways extend and improve,
classical results of Frederickson (1993) and Frederickson and
Johnson (1982). In particular, for selecting the $k$-th smallest item,
or the set of~$k$ smallest items, from a collection of~$m$ sorted lists we obtain
a new optimal \emph{``output-sensitive"} algorithm that performs only $O(m+\sum_{i=1}^m \log(k_i+1))$
comparisons, where $k_i$ is the number of items of the $i$-th list that belong to the overall set of~$k$
smallest items.
\end{abstract}

\section{Introduction}\label{S-intro}

The input to the standard \emph{selection problem} is a set of $n$ items, drawn from a totally ordered domain, and an integer $1\le k\le n$. The goal is to return the $k$-th smallest item in the set. A classical result of Blum et al.\ \cite{BFPRT73} says that the selection problem can be solved deterministically in $O(n)$ time, 
i.e., faster than \emph{sorting} the set. The number of comparisons required for selection was reduced by Sch{\"{o}}nhage et al.\ \cite{SPP76} to $3n$, and by Dor and Zwick \cite{DoZw99,DoZw01} to about $2.95n$.

In the \emph{generalized selection} problem, we are also given a \emph{partial order} $P$ known to hold for the set of~$n$ input items. The goal is again to return the $k$-th smallest item. The corresponding \emph{generalized sorting} problem was extensively studied. It was shown by Kahn and Saks \cite{KaSa84} that the problem can be solved using only $O(\log e(P))$ comparisons, where~$e(P)$ is the number of \emph{linear extensions} of~$P$. Thus, the information-theoretic lower bound is tight for generalized sorting.
The algorithm of Kahn and Saks \cite{KaSa84} performs only $O(\log e(P))$ comparisons, but may spend much more time on deciding which comparisons to perform. Kahn and Kim \cite{KaKi95} and Cardinal et al.\ \cite{CFJJM13} gave algorithms that perform only $O(\log e(P))$ comparisons and run in polynomial time.

A moment's reflection shows that an algorithm that finds the $k$-th smallest item of a set, must also identify the set of~$k$ smallest items of the set.\footnote{\BLUE{The information gathered by a comparison-based algorithm corresponds to a partial order which can be represented by a DAG (Directed Acyclic Graph). Every \emph{topological sort} of the DAG corresponds to a total order of the items consistent with the partial order. Suppose that $e$ is claimed to be the $k$-th smallest item and suppose, for the sake of contradiction, that the set~$I$ of the items that are \emph{incomparable} with~$e$ is non-empty. Then, there is a topological sort in which $e$ is before all the items of~$I$, and another topological sort in which $e$ is after all the items of~$I$, contradicting the fact that $e$ is the $k$-th smallest item in all total orders consistent with the partial order.
}}
Given a partial order $P$, let $s_k(P)$ be the number of subsets of size~$k$ that may possibly be the set of~$k$ smallest items in~$P$. Then, $\log_2 s_k(P)$ is clearly a lower bound on the number of comparisons required to select the $k$-th smallest item, or the set of~$k$ smallest items. Unlike sorting, this information-theoretic lower bound for selection may be extremely weak. For example, the information-theoretic lower bound for selecting the \emph{minimum} is only $\log_2 n$, while $n-1$ comparisons are clearly needed (and are sufficient). To date, there is no characterization of pairs $(P,k)$ for which the information-theoretic lower bound for selection is tight, nor an alternative general technique to obtain a tight lower bound.

Frederickson and Johnson \cite{FrJo82,FrJo84,FrJo90} and Frederickson \cite{Frederickson93} studied the generalized selection problem for some interesting specific partial orders. Frederickson \cite{Frederickson93} considered the case in which the input items are items of a binary \emph{min-heap}, i.e., they are arranged in a binary tree, with each item smaller than its two children. Frederickson \cite{Frederickson93} gave a complicated algorithm that finds the $k$-th smallest item using only $O(k)$ comparisons, matching the information-theoretic lower bound for this case. (Each subtree of size~$k$ of the heap, containing the root, can correspond to the set of~$k$ smallest items, and there are $\frac{1}{k+1}{2k \choose k}$, the $k$-th \emph{Catalan} number, such subtrees.)

Frederickson and Johnson \cite{FrJo82,FrJo84,FrJo90} considered three other interesting special cases. (i) The input items are in a collection of sorted lists, or equivalently they reside in a row-sorted matrix; (ii) The input items reside in a collection of matrices, where each matrix is both row- and column-sorted; (iii) The input items are $X+Y$, where $X$ and $Y$ are unsorted sets of items.\footnote{By $X+Y$ we mean the set of pairwise sums $\{x+y ~|~ x\in X, y\in Y\}$.} For each of these cases, they present a selection algorithm that matches the information-theoretic lower bound.

We note in passing that \emph{sorting} $X+Y$ is a well studied problem. Fredman \cite{Fredman76b} showed that $X+Y$, where $|X|=|Y|=n$, can be sorted using only $O(n^2)$ comparisons, but it is not known how to do it in $O(n^2)$ time. (An intriguing situation!) Fredman \cite{Fredman76b} also showed that $\Omega(n^2)$ comparisons are required, if only comparisons between items in $X+Y$, i.e., comparisons of the form $x_i+y_j\le x_k+y_\ell$, are allowed. Lambert \cite{Lambert92} and Steiger and Streinu \cite{StSt95} gave algorithms that sort $X+Y$ in $O(n^2\log n)$ time using only $O(n^2)$ comparisons. Kane et al.\ \cite{KLM17}, in a breakthrough result, have shown recently that $X+Y$ can be sorted using only $O(n\log^2 n)$ comparisons of the form $(x_i+y_j)-(x_{i'}+y_{j'})\le (x_k+y_\ell)-(x_{k'}+y_{\ell'})$, but it is again not known how to implement their algorithm efficiently.

The \emph{median} of $X+Y$, on the other hand, can be found in $O(n\log n)$ time, and $O(n\log n)$ comparisons of items in $X+Y$, as was already shown by Johnson and Mizoguchi \cite{JoMi78} and Johnson and Kashdan \cite{JoKa78}. The selection problem from $X+Y$ becomes more challenging when $k=o(n^2)$.

Frederickson \cite{Frederickson93} gives two applications for selection from a binary min-heap. The first is in an algorithm for listing the~$k$ smallest spanning trees of an input graph. The second is a certain resource allocation problem. Eppstein \cite{Eppstein98} uses the heap selection algorithm in his $O(m+n\log n+k)$ algorithm for generating the~$k$ shortest paths between a pair of vertices in a digraph. As pointed out by Frederickson and Johnson \cite{FrJo84}, selection from $X+Y$ can be used to compute the Hodges-Lehmann \cite{HoLe63} estimator in statistics. Selection from a matrix with sorted rows solves the problem of ``optimum discrete distribution of effort" with concave functions, i.e., the problem of maximizing $\sum_{i=1}^m f_i(k_i)$ subject to $\sum_{i=1}^m k_i =k$, where the $f_i$'s are concave and the $k_i$'s are non-negative integers. (See Koopman \cite{Koopman53} and other references in \cite{FrJo84}.) Selection from a matrix with sorted rows is also used by Brodal et al.\ \cite{BFGL09} and Bremner et al.\ \cite{BCDEHILPT14}.

The $O(k)$ heap selection algorithm of Frederickson \cite{Frederickson93} is fairly complicated. The na\"ive algorithm for the problem runs in $O(k\log k)$ time. Frederickson first improves this to $O(k\log\log k)$, then to $O(k 3^{\log^*k})$, then to $O(k 2^{\log^*k})$, and finally to $O(k)$.

Our first result is a very simple $O(k)$ heap selection algorithm obtained by running the na\"ive $O(k\log k)$ algorithm using an auxiliary \emph{soft heap} instead of a standard heap. Soft heaps, discussed below, are fairly simple data structures whose implementation is not much more complicated than the implementation of standard heaps. Our overall $O(k)$ algorithm is thus simple and easy to implement and comprehend.

Relying on our simple $O(k)$ heap selection algorithm, we obtain simplified algorithms for selection from row-sorted matrices and from $X+Y$. Selecting the $k$-th item from a row-sorted matrix with~$m$ rows using our algorithms requires $O(m+k)$ time, if $k\le 2m$, and $O(m\log\frac{k}{m})$ time, if $k\ge 2m$, matching the optimal results of Frederickson and Johnson \cite{FrJo82}. Furthermore, we obtain a new optimal ``output-sensitive'' algorithm whose running time is $O(m+\sum_{i=1}^m \log(k_i+1))$, where~$k_i$ is the number of items of the $i$-th row that belong to the set of~$k$ smallest items in the matrix.
We also use our simple $O(k)$ heap selection algorithm to obtain simple optimal algorithms for selection from $X+Y$.

Soft heaps are ``approximate'' priority queues introduced by Chazelle \cite{Chazelle00b}. They form a major building block in his deterministic $O(m\alpha(m,n))$-time algorithm for finding minimum spanning trees \cite{Chazelle00a}, which is currently the fastest known deterministic algorithm for the problem. Chazelle \cite{Chazelle00b} also shows that soft heaps can be used to obtain a simple linear time (standard) selection algorithm. (See the next section.) Pettie and Ramachandran \cite{PeRa02} use soft heaps to obtain an \emph{optimal} deterministic minimum spanning algorithm with a yet unknown running time. A simplified implementation of soft heaps is given in Kaplan et al.\ \cite{KTZ13}.

All algorithms considered in the paper are \emph{comparison-based}, i.e., the only operations they perform on the input items are pairwise comparisons. In the selection from $X+Y$ problem, the algorithms make pairwise comparisons in~$X$, in~$Y$ and in~$X+Y$. The number of comparisons performed by the algorithms presented in this paper dominates the total running time of the algorithms.

The rest of the paper is organized as follows. In Section~\ref{S-soft-heaps} we review the definition of soft heaps. In Section~\ref{S-heap} we describe our heap selection algorithms. In Section~\ref{S-binary} we describe our basic algorithm for selection from \emph{binary} min-heaps. In Sections~\ref{S-dary} and~\ref{S-general} we extend the algorithm to $d$-ary heaps and then to general heap-ordered trees and forests. In Section~\ref{S-matrix} we describe our selection algorithms from row-sorted matrices. In Section~\ref{S-mk} we describe a simple $O(m+k)$ algorithm which is optimal if $k=O(m)$. In Section~\ref{S-mlogkm} we build on the $O(m+k)$ algorithm to obtain an optimal $O(m\log\frac{k}{m})$ algorithm, for $k\ge 2m$. In Sections~\ref{S-Slogni} and~\ref{S-Slogki} we obtain new results that were not obtained by Frederickson and Johnson \cite{FrJo82}. In Section~\ref{S-Slogni} we obtain an $O(m+\sum_{i=1}^m \log n_i)$ algorithm, where $n_i\ge 1$ is the length of the $i$-th row of the matrix. In Section~\ref{S-Slogki} we obtain the new $O(m+\sum_{i=1}^m \log(k_i+1))$ optimal output-sensitive algorithm. In Section~\ref{S-XY} we present our selection algorithms from $X+Y$.
In Section~\ref{S-XY-mnk} we give a simple $O(m+n+k)$ algorithm, where $|X|=m$, $|Y|=n$. In Section~\ref{S-XY-mlogmk} we give a simple $O(m\log\frac{k}{m})$ algorithm, for $m\ge n$ and $k\ge 6m$.
We conclude in Section~\ref{S-concl} with some remarks and open problems.

\section{Soft heaps}\label{S-soft-heaps}

Soft heaps, invented by Chazelle~\cite{Chazelle00b}, support the following operations:

$\;\;\;\SOFTHEAP(\varepsilon)$: Create and return a new, empty soft heap with \emph{error parameter}~$\varepsilon$.

$\;\;\;\INSERT(Q,e)$: Insert item~$e$ into soft heap $Q$.

$\;\;\;\MELD(Q_1, Q_2)$: Return a soft heap containing all items in heaps~$Q_1$ and~$Q_2$, destroying~$Q_1$ and~$Q_2$.

$\;\;\;\EXTRACTMIN(Q)$: Delete from the soft heap and return an item of minimum key in heap~$Q$.

In Chazelle \cite{Chazelle00b}, \EXTRACTMIN\ operations are broken into \FINDMIN\ and \DELETE\ operations. We only need combined \EXTRACTMIN\ operations. We also do not need \MELD\ operations in this paper.

The main difference between soft heaps and regular heaps is that soft heaps are allowed to \emph{increase} the keys of some of the items in the heap by an arbitrary amount. Such items are said to become \emph{corrupt}. The soft heap implementation chooses which items to corrupt, and by how much to increase their keys. The only constraint is that for a certain \emph{error parameter} $0\le \varepsilon<1$, the number of corrupt items \emph{in} the heap is at most $\varepsilon I$, where~$I$ is the number of \emph{insertions} performed so far. The ability to corrupt items allows the implementation of soft heaps to use what Chazelle~\cite{Chazelle00b} calls the ``data structures equivalent of car pooling'' to reduce the amortized time per operation to $O(\log\frac{1}{\varepsilon})$, which is the optimal possible dependency on~$\varepsilon$. In the implementation of Kaplan et al.~\cite{KTZ13}, \EXTRACTMIN\ operations take $O(\log\frac{1}{\varepsilon})$ amortized time, while all other operations take $O(1)$ amortized time. (In the implementation of Chazelle~\cite{Chazelle00b}, \INSERT\ operations take $O(\log\frac{1}{\varepsilon})$ amortized time while the other operations take $O(1)$ time.)

An \EXTRACTMIN\ operation returns an item whose current, possibly corrupt, key is the smallest in the heap. Ties are broken arbitrarily. (Soft heaps usually give many items the same corrupt key, even if initially all keys are distinct.) Each item~$e$ thus has two keys associated with it: its original key $e.\KEY$, and its current key in the soft heap $e.\KEYP$, where $e.key\le e.\KEYP$. If $e.\KEY<e.\KEYP$, then~$e$ is corrupt. The current key of an item may increase several times.

At first sight, it may seem that the guarantees provided by soft heaps are extremely weak. The only bound available is on the number of corrupt items currently \emph{in} the heap. In particular, \emph{all} items extracted from the heap may be corrupt. Nonetheless, soft heaps prove to be an extremely useful data structure. In particular, they play a key role in the fastest known deterministic algorithm of Chazelle \cite{Chazelle00a} for finding minimum spanning trees.

Soft heaps can also be used, as shown by Chazelle~\cite{Chazelle00b}, to select an \emph{approximate median} of $n$ items. Initialize a soft heap with some error parameter $\varepsilon<\frac{1}{2}$. Insert the~$n$ items into the soft heap and then perform $(1-\varepsilon)\frac{n}{2}$ \EXTRACTMIN\ operations. Find the maximum item~$e$, with respect to the original keys, among the extracted items. The rank of~$e$ is between $(1-\varepsilon)\frac{n}{2}$ and $(1+\varepsilon)\frac{n}{2}$. The rank is at least $(1-\varepsilon)\frac{n}{2}$ as~$e$ is the largest among $(1-\varepsilon)\frac{n}{2}$ items. The rank is at most $(1+\varepsilon)\frac{n}{2}$ as the items remaining in the soft heap that are smaller than~$e$ must be corrupt, so there are at most $\varepsilon n$ such items. For, say, $\varepsilon=\frac{1}{4}$, the running time of the algorithm is $O(n)$.

Using a linear time approximate median algorithm, we can easily obtain a linear time algorithm for selecting the $k$-th smallest item. We first compute the true rank~$r$ of the approximate median~$e$. If $r=k$ we are done. If $r>k$, we throw away all items larger than~$e$. Otherwise, we throw away all items smaller than~$e$ and replace~$k$ by $k-r$. We then continue recursively. In $O(n)$ time, we reduced $n$ to at most $(\frac{1}{2}+\varepsilon)n$, so the total running time is $O(n)$.

In this paper, we show the usefulness of soft heaps in solving generalized selection problems. We obtain simpler algorithms than those known before, and some results that were not known before.

In Chazelle~\cite{Chazelle00b} and Kaplan et al.~\cite{KTZ13}, soft heaps may corrupt items while performing any type of operation. It is easy, however, to slightly change the implementation of \cite{KTZ13} such that corruptions only occur \emph{following} \EXTRACTMIN\ operations. In particular, \INSERT\ operations do not cause corruption, and an \EXTRACTMIN\ operation returns an item with a smallest current key at the \emph{beginning} of the operation. These assumptions simplify algorithms that use soft heaps, and further simplify their analysis. The changes needed in the implementation of soft heaps to meet these assumptions are minimal. The operations \INSERT\ (and \MELD) are simply implemented in a \emph{lazy} way. The implementation of \cite{KTZ13} already has the property that \EXTRACTMIN\ operations cause corruptions only \emph{after} extracting an item with minimum current key.

We assume that an \EXTRACTMIN\ operation returns a pair $(e,C)$, where~$e$ is the extracted item, and~$C$ is a list of items that became corrupt after the extraction of~$e$, i.e., items that were not corrupt before the operation, but are corrupt after it. We also assume that $e.\CORRUPT$ is a bit that says whether~$e$ is corrupt. (Note that $e.\CORRUPT$ is simply a shorthand for $e.\KEY<e.\KEYP$.) It is again easy to change the implementation of \cite{KTZ13} so that \EXTRACTMIN\ operations return a list~$C$ of newly corrupt items, without affecting the amortized running times of the various operations. (In particular, the amortized running time of an \EXTRACTMIN\ operation is still $O(\log\frac{1}{\varepsilon})$, independent of the length of~$C$. As each item becomes corrupt only once, it is easy to charge the cost of adding an item to~$C$ to its insertion into the heap.)

\BLUE{
We stress that the assumptions we make on soft heaps in this paper can be met by minor and straightforward modifications of the implementation of Kaplan et al.~\cite{KTZ13}, as sketched above. No complexities are hidden here. We further believe that due to their usefulness, these assumptions will become the standard assumptions regarding soft heaps.}

\section{Selection from heap-ordered trees}\label{S-heap}

In Section~\ref{S-binary} we present our simple, soft heap-based, $O(k)$ algorithm for selecting the $k$-th smallest item, and the set of~$k$ smallest items from a binary min-heap. This algorithm is the cornerstone of this paper. For simplicity, we assume throughout this section that the input heap is infinite. In particular, each item~$e$ in the input heap has two children $e.\LEFT$ and $e.\RIGHT$. (A non-existent child is represented by a dummy item with key $+\infty$.) In Section~\ref{S-dary} we adapt the algorithm to work for $d$-ary heaps, for $d\ge 3$, using ``on-the-fly ternarization via heapification''. In Section~\ref{S-general} we extend the algorithm to work on any heap-ordered tree or forest. The results of Section~\ref{S-general} are new.

\subsection{Selection from binary heaps}\label{S-binary}

\begin{figure}
\begin{center}
\parbox{2.3in}{
\begin{algorithm}[H]
\DontPrintSemicolon
\NoCaptionOfAlgo
\caption{$\HEAPSELECT(r)$:}
\BLUE{$S\gets \emptyset$ \;}
$Q\gets \HEAP()$ \;
$\INSERT(Q,r)$ \;
\For{$i\gets 1$ \KwTo $k$}{
    $e\gets \EXTRACTMIN(Q)$ \;
    \BLUE{$\APPEND(S,e)$ \;}
    $\INSERT(Q,e.\LEFT)$ \;
    $\INSERT(Q,e.\RIGHT)$
}
\BLUE{\Return $S$}
\end{algorithm}
\caption{\label{A1}Extracting the $k$ smallest items from a binary min-heap with root~$r$ using a standard heap.}
}
\hspace*{1.5cm}
\parbox{3.2in}{
\begin{algorithm}[H]
\NoCaptionOfAlgo
\caption{$\SOFTSELECT(r)$:}
\DontPrintSemicolon
\BlankLine
\BLUE{$S\gets \emptyset$ \;}
$Q\gets \SOFTHEAP(1/4)$ \;
$\INSERT(Q,r)$ \;
\BLUE{$\APPEND(S,r)$ \;}
\BlankLine
\For{$i\gets 1$ \KwTo $k-1$}{
    $(e,C)\gets \EXTRACTMIN(Q)$ \;
    \lIf{\NOT\ e.\CORRUPT}{$C\gets C\cup\{e\}$}
    \For{$e\in C$}{
    $\INSERT(Q,e.\LEFT)$ \;
    $\INSERT(Q,e.\RIGHT)$ \;
    \BLUE{$\APPEND(S,e.\LEFT)$ \;
    $\APPEND(S,e.\RIGHT)$}
    }
}
\BLUE{\Return $\SELECT(S,k)$}
\end{algorithm}
\caption{\label{A2}Extracting the $k$ smallest items from a binary min-heap with root~$r$ using a soft heap.}
}
\end{center}
\end{figure}

The na\"ive algorithm for selection from a binary min-heap is given in Figure~\ref{A1}. The root $r$ of the input heap is inserted into an auxiliary heap (priority queue). The minimal item~$e$ is extracted from the heap and appended to a list~$S$. The two children of~$e$, if they exist, are inserted into the heap. This operation is repeated~$k$ times. After $k$ iterations, the items in~$S$ are the~$k$ smallest items in the input heap, in sorted order. Overall, $2k+1$ items are inserted into the heap and~$k$ items are extracted, so the total running time is $O(k\log k)$, which is optimal if the~$k$ smallest items are to be reported in sorted order.

Frederickson \cite{Frederickson93} devised a very complicated algorithm that outputs the~$k$ smallest items, not necessarily in sorted order, in only $O(k)$ time, matching the information-theoretic lower bound. In Figure~\ref{A2} we give our very simple algorithm $\SOFTSELECT(r)$ for the same task, which also runs in optimal $O(k)$ time and performs only $O(k)$ comparisons. Our algorithm is a simple modification of the na\"ive algorithm of Figure~\ref{A1} with the auxiliary heap replaced by a soft heap.
The resulting algorithm is much simpler than the algorithm of Frederickson \cite{Frederickson93}.

Algorithm $\SOFTSELECT(r)$ begins by initializing a soft heap~$Q$ with error parameter $\varepsilon=1/4$ and by inserting the root~$r$ of the input heap into it. Items inserted into the soft heap~$Q$ are also inserted into a list~$S$. The algorithm then performs $k-1$ iterations. In each iteration, the operation $(e,C)\gets\EXTRACTMIN$ extracts an item~$e$ with the smallest (possibly corrupt) key currently in~$Q$, and also returns the set of items $C$ that become corrupt as a result of the removal of~$e$ from~$Q$. If~$e$ is \emph{not} corrupt, then it is added to~$C$. Now, for each item $e\in C$, we insert its two children $e.\LEFT$ and $e.\RIGHT$ into the soft heap~$Q$ and the list~$S$.

Lemma~\ref{L-soft-1} below claims that $\SOFTSELECT(r)$ inserts the~$k$ smallest items of the input heap into the soft heap~$Q$. Lemma~\ref{L-soft-2} claims that, overall, only $O(k)$ items are inserted into~$Q$, and hence into~$S$. Thus, the $k$ smallest items in the input heap can be found by selecting the $k$ smallest items in the list~$S$ using a standard selection algorithm.

\begin{lemma}\label{L-soft-1} Algorithm $\SOFTSELECT(r)$ inserts the~$k$ smallest items from the input binary min-heap into the soft heap~$Q$. (Some of them may subsequently be extracted from the heap.)
\end{lemma}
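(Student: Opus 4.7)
The plan is to argue by contradiction. Let $T_k$ denote the set of~$k$ smallest items in the input heap (with some fixed canonical tie-breaking rule, so the set is well defined). Assume some item of~$T_k$ is never inserted into~$Q$, and let~$e$ be such an item of smallest key. The first observation is that $T_k$ is ancestor-closed: by the heap property every ancestor of an item in~$T_k$ has a smaller key and therefore also lies in~$T_k$. In particular the parent~$p$ of~$e$ lies in~$T_k$ and satisfies $p.\KEY < e.\KEY$ (using the tie-breaking rule if needed), so by the minimality of~$e$ the item~$p$ is inserted into~$Q$ at some point.

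The crux is to show that~$p$ in fact remains in~$Q$ throughout the algorithm and is never corrupted. For this I would isolate the structural observation that an item has both its children inserted exactly when it appears in the set~$C$ of some iteration, and that happens precisely when it is either extracted while uncorrupted or becomes corrupt during an \EXTRACTMIN. Since~$e$ is a child of~$p$ and is never inserted, $p$ is never ``expanded'' in this sense, so $p$ has never been extracted uncorrupted and has never become corrupt. Consequently $p$ sits in~$Q$ at the end of the algorithm, with current key equal to its true key $p.\KEY$.

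With~$p$ present in~$Q$ and uncorrupted throughout all $k-1$ iterations, a counting argument finishes the proof. At the start of every iteration the minimum current key in~$Q$ is at most $p.\KEY$, so the extracted item~$e_i$ has current key at most $p.\KEY$; since corruption only increases keys, the original key of~$e_i$ is also at most $p.\KEY < e.\KEY$. Because $T_k$ consists of the~$k$ smallest items and $e\in T_k$, every item with key strictly less than $e.\KEY$ lies in~$T_k$, so $e_i\in T_k$. Moreover $e_i\ne e$ (smaller key) and $e_i\ne p$ (which stays in~$Q$), so the $k-1$ distinct extractions are $k-1$ distinct elements of $T_k\setminus\{e,p\}$, a set of size $k-2$ — contradiction.

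The main obstacle is the second step: pinning down that an item is expanded iff it is either extracted uncorrupted or becomes corrupt, and concluding from this that~$p$ both stays in~$Q$ and is never corrupted. Once this is in hand the counting step is essentially forced, and does not even need to invoke the quantitative $\varepsilon=1/4$ corruption bound. A minor bookkeeping point is that each item is inserted into~$Q$ at most once (it has a unique parent in the binary heap, and that parent is expanded at most once), so the $k-1$ extractions are automatically distinct; and equal-key ties are handled by fixing a canonical total order on items so that strict key inequalities can be replaced by lexicographic ones.
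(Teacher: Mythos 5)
Your overall strategy (contradiction via a minimal non-inserted item $e$ of $T_k$, the characterization of when an item's children get inserted, and the final counting against $|T_k\setminus\{e,p\}|=k-2$) is sound, and the second step is correct as far as it goes: since $e$ is never inserted, its parent $p$ is never in any set $C$, hence never becomes corrupt and is never extracted, so once inserted it stays in $Q$ with its true key. But there is a genuine gap in the counting step: you assert that $p$ is ``present in $Q$ and uncorrupted throughout all $k-1$ iterations,'' whereas your own minimality argument only gives that $p$ is inserted \emph{at some point} --- namely, when $e$'s grandparent is expanded, which can happen arbitrarily late (e.g.\ when the $k$ smallest items form a root-to-leaf path, $p$ may enter $Q$ only around iteration $k-2$). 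For the iterations before $p$ is inserted, nothing in your argument bounds the key of the extracted item, so the conclusion that all $k-1$ extractions land in $T_k\setminus\{e,p\}$ is unsupported.

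The fix is to replace $p$ by the \emph{lowest already-inserted ancestor} $q$ of $e$ at the start of each iteration: $q$ exists (the root is inserted at iteration $0$), $q\ne e$, its child toward $e$ is not yet inserted, so $q$ has never been expanded and is therefore in $Q$ and uncorrupted, with $q.\KEY\le p.\KEY<e.\KEY$ by heap order. Then the extracted item's original key is at most $q.\KEY<e.\KEY$ and the count goes through. Note that this patched statement --- every not-yet-inserted item always has an uncorrupted, not-yet-extracted ancestor in $Q$ --- is exactly the ``barrier'' invariant that the paper's proof establishes by induction (its invariants (a) and (b) together with claim (c)); so the missing piece is not a technicality but the core content of the lemma. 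Once you add it, your argument becomes a valid, slightly more localized variant of the paper's proof: you only need the barrier above the single witness $e$, and you avoid claim (d) entirely, at the price of running the argument by contradiction rather than directly identifying the rank of the smallest barrier item.
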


\begin{figure}[t]
\begin{center}
\includegraphics[scale=0.5]{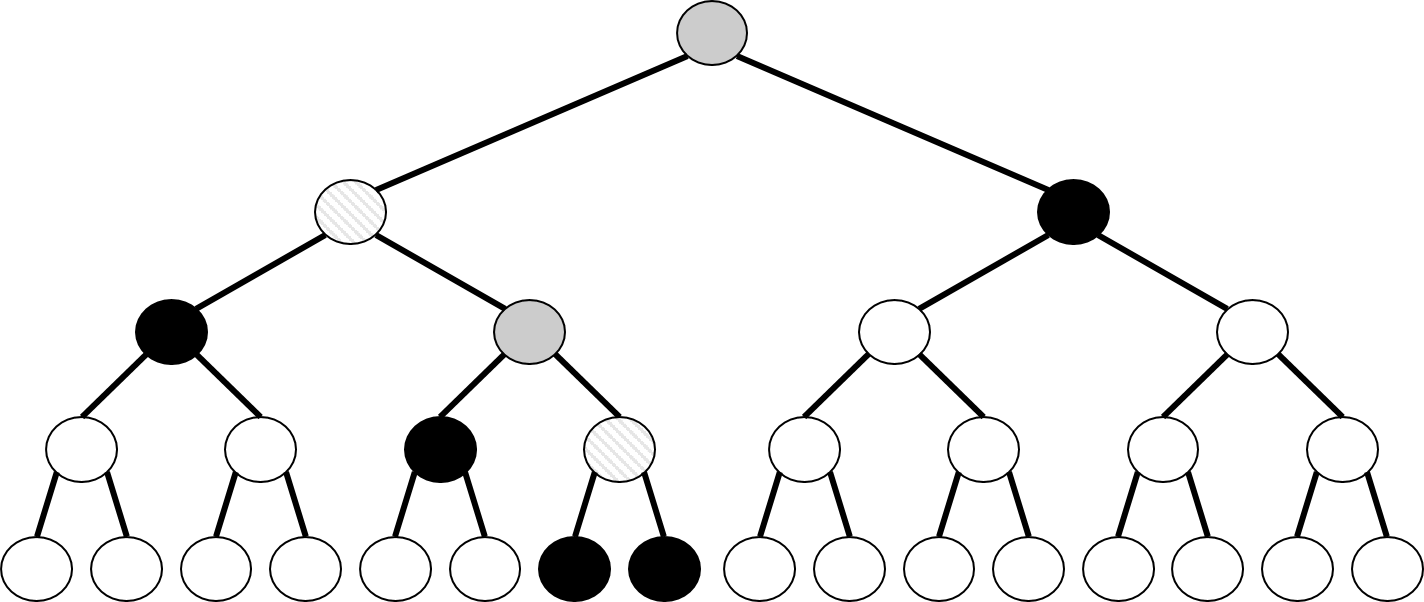}
\end{center}
\vspace*{-10pt}
\caption{Types of items in the input heap. White nodes belong to~$A$, i.e., were not inserted yet into the soft heap~$Q$; black nodes belong to the barrier~$B$; gray nodes belong to~$C$, i.e., are corrupt; striped nodes belong to~$D$, i.e., were already deleted.}\label{F-barrier}
\end{figure}

\begin{proof} At the beginning of an iteration of algorithm $\SOFTSELECT$, let $A$ be the set of items of the input binary heap that were not yet inserted into the soft heap~$Q$; let~$B$ be the set of items that were inserted, not yet removed and are not corrupt; let $C$ be the set of items that were inserted, not yet removed, and are \emph{corrupt}; let~$D$ be the set of items that were inserted and already \emph{deleted} from~$Q$. We prove below, by easy induction, the following two invariants:
\begin{itemize}
\setlength{\parsep}{0pt}
\setlength{\parskip}{0pt}
\setlength\itemsep{0pt}
 \item[(a)] All strict ancestors of items in~$B$ are in $C\cup D$.
 \item[(b)] Each item in~$A$ has an ancestor in~$B$.
\end{itemize}

Thus, the items in~$B$ form a \emph{barrier} that separates the items of~$A$, i.e., items that were not inserted yet into the heap, from the items of $C\cup D$, i.e., items that were inserted and are either corrupt or were already removed from the soft heap~$Q$. For an example, see Figure~\ref{F-barrier}.

Invariants (a) and (b) clearly hold at the beginning of the first iteration, when $B=\{r\}$ and the other sets are empty.
Assume that (a) and (b) hold at the beginning of some iteration. Each iteration removes an item from the soft heap. The item removed is either a corrupt item from~$C$, or an item (in fact the smallest item) on the barrier~$B$. Following the extraction, some items on the barrier~$B$ become corrupt and move to~$C$. The barrier is `mended' by inserting to~$Q$ the children of items on~$B$ that were extracted or became corrupt. By our simplifying assumption, insertions do not corrupt items, so the newly inserted items belong to~$B$ and are thus part of the new barrier, reestablishing (a) and (b).


We now make the following two additional claims:
\begin{itemize}
\setlength{\parsep}{0pt}
\setlength{\parskip}{0pt}
\setlength\itemsep{0pt}
 \item[(c)] The item extracted at each iteration is smaller than or equal to the smallest item on the barrier. (With respect to the original keys.)
 \item[(d)] The smallest item on the barrier cannot decrease.
\end{itemize}
Claim (c) follows immediately from the definition of an $\EXTRACTMIN$ operation and our assumption that corruption occurs only after an extraction. All the items on the barrier, and in particular the smallest item~$e$ on the barrier, are in the soft heap and are not corrupt. Thus, the extracted item is either~$e$, or a corrupt item $f$ whose corrupt key is still smaller than~$e$. As corruption can only increase keys, we have $f<e$.

Claim (d) clearly holds as items on the barrier at the end of an iteration were either on the barrier at the beginning of the iteration, or are children of items that were on the barrier at the beginning of the iteration.

Consider now the smallest item~$e$ on the barrier after $k-1$ iterations. As all extracted items are smaller than it, the rank of~$e$ is at least~$k$. Furthermore, all items smaller than~$e$ must be in $C\cup D$, i.e., inserted at some stage into the heap.
Indeed, let $f$ be an item of~$A$, i.e., an item not inserted into~$Q$. By invariant (b), $f$ has an ancestor $f'$ on the barrier. By heap order and the assumption that~$e$ is the smallest item on the barrier we indeed get $e\le f'<f$. Thus, the smallest~$k$ items were indeed inserted into the soft heap as claimed.
\end{proof}

The proof of Lemma~\ref{L-soft-1} relies on our assumption that corruptions in the soft heap occur only after \EXTRACTMIN\ operations. A slight change in the algorithm is needed if \INSERT\ operations may cause corruptions; we need to repeatedly add children of newly corrupt items until no new items become corrupt. (Lemma~\ref{L-soft-2} below shows that this process must end if $\varepsilon<\frac{1}{2}$. The process may \emph{not} end if $\varepsilon\ge \frac{1}{2}$.)
The algorithm, without any change, remains correct, and in particular Lemma~\ref{L-soft-1} holds, if \EXTRACTMIN\ operations are allowed to corrupt items before extracting an item of minimum (corrupt) key. The proof, however, becomes more complicated. (Claim (c), for example, does not hold in that case.)

\begin{lemma}\label{L-soft-2} Algorithm $\SOFTSELECT(r)$ inserts only $O(k)$ items into the soft heap~$Q$.
\end{lemma}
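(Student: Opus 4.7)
The plan is to do a simple amortized counting argument that relates the number of insertions to the bound on corrupt items guaranteed by the soft heap, together with the fact that only $k-1$ extractions are performed.

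First I would introduce notation. Let $I$ be the total number of items inserted into $Q$ during the run of $\SOFTSELECT(r)$; this is the quantity we wish to bound by $O(k)$. Let $E_c$ be the number of iterations in which the extracted item $e$ is corrupt and $E_{nc}$ be the number in which it is not corrupt, so that $E_c + E_{nc} = k-1$. Let $N$ denote the total number of corruption events, i.e.\ the sum over all iterations of $|C_i|$, where $C_i$ is the list of \emph{newly} corrupt items returned by the $i$-th \EXTRACTMIN.

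Next I would count insertions per iteration. Inspection of the algorithm shows that iteration $i$ inserts exactly two children for each element of $C_i$, augmented by $e$ itself whenever $e$ is not corrupt. Adding the single initial insertion of the root, this gives
\[
I \;=\; 1 \;+\; 2N \;+\; 2E_{nc}.
\]

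Now I would invoke the defining property of a soft heap with parameter $\varepsilon=\tfrac14$: at every moment, the number of corrupt items currently in $Q$ is at most $\varepsilon$ times the number of insertions performed so far, hence at most $\varepsilon I$ at the end. Since each item becomes corrupt at most once, and every corruption event results in an item that either was eventually extracted as a corrupt item (contributing to $E_c$) or is still in the heap at the end (contributing at most $\varepsilon I$), we get
\[
N \;\le\; E_c \;+\; \varepsilon I.
\]

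Substituting back yields
\[
I \;\le\; 1 \;+\; 2\varepsilon I \;+\; 2(E_c+E_{nc}) \;=\; 1 \;+\; 2\varepsilon I \;+\; 2(k-1),
\]
and with $\varepsilon=\tfrac14$ this rearranges to $I\le 4k-2=O(k)$, completing the proof.

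The argument is essentially routine once the accounting is set up correctly, so there is no serious obstacle. The only subtle point to be careful about is the bookkeeping of $N$: one must use that a corrupt item remains corrupt until extracted and that each item is corrupted at most once, so that "total corruptions ever" equals "corrupt items still in the heap" plus "corrupt items that have been extracted", rather than only a bound on the instantaneous count. Also worth noting is that the choice $\varepsilon=\tfrac14$ is not tight; any $\varepsilon<\tfrac12$ suffices to make the coefficient of $I$ on the right strictly less than $1$, which is exactly why the algorithm sets $\varepsilon=\tfrac14$.
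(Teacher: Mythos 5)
Your proof is correct and follows essentially the same route as the paper's: one linear relation expressing the number of insertions in terms of the number of corruptions and extractions, a second bounding the number of corruptions by the soft-heap guarantee ($\varepsilon I$) plus the number of extractions, and then solving the resulting system. Your accounting is slightly sharper---distinguishing corrupt from non-corrupt extractions avoids the paper's double-counting of children of corrupt extracted items---so you get $I\le 4k-2$ where the paper states $I<8k$, but the underlying argument is the same.
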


\begin{proof} Let $I$ be the number of insertions made by $\SOFTSELECT(r)$, and let $C$ be the number of items that become corrupt during the running of the algorithm. (Note that $\SOFTSELECT(r)$ clearly terminates.) Let $\varepsilon(=\frac{1}{4})$ be the error parameter of the soft heap. We have $I<2k+2C$, as each inserted item is either the root~$r$, or a child of an item extracted during one of the $k-1$ iterations of the algorithm, and there are at most $2k-1$ such insertions, or a child of a corrupt item, and there are exactly $2C$ such insertions. We also have $C<k+\varepsilon I$, as by the definition of soft heaps, at the end of the process at most $\varepsilon I$ items in the soft heap may be corrupt, and as only $k-1$ (possibly corrupt) items were removed from the soft heap. Combining these two inequalities we get $C<k+\varepsilon(2k+2C)$, and hence $(1-2\varepsilon)C<(1+2\varepsilon)k$. Thus, if $\varepsilon<\frac{1}{2}$ we get
\[ C \;<\; \frac{1+2\varepsilon}{1-2\varepsilon}\,k \quad,\quad I \;<\;2\left(1+\frac{1+2\varepsilon}{1-2\varepsilon}\right)k\;.\]
The number of insertions $I$ is therefore $O(k)$, as claimed. (For $\varepsilon=\frac{1}{4}$, $I<8k$.)
\end{proof}

Combining the two lemmas we easily get:

\begin{theorem}\label{T-soft} Algorithm $\SOFTSELECT(r)$ selects the~$k$ smallest items of a binary min-heap in $O(k)$ time.
\end{theorem}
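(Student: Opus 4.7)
The plan is to combine Lemmas~\ref{L-soft-1} and~\ref{L-soft-2} with a standard linear-time selection at the very end. First I would argue correctness: by Lemma~\ref{L-soft-1}, the $k$ smallest items of the input heap are all inserted into $Q$, and since the algorithm appends to~$S$ exactly when it inserts into~$Q$ (the root at initialization, and both children of every item in~$C$ in the loop), the list~$S$ contains the $k$ smallest items of the input heap. Hence running a deterministic linear-time selection procedure on~$S$ correctly returns the set of $k$ smallest items.

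Next I would bound the running time. By Lemma~\ref{L-soft-2}, we have $|S| = I = O(k)$ insertions into the soft heap, and the number of \EXTRACTMIN\ operations is exactly $k-1$. With error parameter $\varepsilon = 1/4$, each \INSERT\ takes $O(1)$ amortized time and each \EXTRACTMIN\ takes $O(\log\frac{1}{\varepsilon}) = O(1)$ amortized time in the implementation of~\cite{KTZ13}. Moreover, the total size of the lists~$C$ returned by all \EXTRACTMIN\ operations is bounded by the total number of items that ever become corrupt, which is $O(k)$ by the analysis in the proof of Lemma~\ref{L-soft-2}; so the cumulative work spent iterating over these lists and inserting children is also $O(k)$. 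The final call to a linear-time selection algorithm on~$S$ runs in $O(|S|) = O(k)$ time.

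Summing these contributions gives a total running time of $O(k)$, completing the proof. The only subtle point, which is really already handled by the two lemmas, is that the correctness of Lemma~\ref{L-soft-1} crucially uses the ``corruption only after \EXTRACTMIN'' assumption on soft heaps; under that assumption there is nothing further to verify here, and the theorem follows by simply concatenating the two lemmas with the closing linear-time selection step.
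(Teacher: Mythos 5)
Your proof is correct and follows essentially the same route as the paper: combine Lemma~\ref{L-soft-1} (correctness) with Lemma~\ref{L-soft-2} (the $O(k)$ bound on insertions), note that with $\varepsilon=1/4$ all soft heap operations are $O(1)$ amortized, and conclude with a linear-time selection on the list~$S$ of size $O(k)$. The extra remarks you add about the cumulative size of the corruption lists $C$ and the ``corruption only after \EXTRACTMIN'' assumption are accurate and, if anything, make the argument slightly more self-contained than the paper's version.
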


\BLUE{
\begin{proof} The correctness of the algorithm follows from Lemmas~\ref{L-soft-1} and~\ref{L-soft-2}. Lemma~\ref{L-soft-2} also implies that only $O(k)$ operations are performed on the soft heap. As $\varepsilon=1/4$, each operation takes $O(1)$ amortized time. The total running time, and the number of comparisons, performed by the loop of $\SOFTSELECT(r)$ is thus $O(k)$. As the size of $S$ is $O(k)$, the selection of the smallest $k$ items from~$S$ also takes only $O(k)$ time.
\end{proof}
}

\subsection{Selection from $d$-ary heaps}\label{S-dary}

\BLUE{
Frederickson \cite{Frederickson93} claims, in the last sentence of his paper, that his algorithm for binary min-heaps can be modified to yield an optimal $O(dk)$ algorithm for $d$-ary min-heaps, for any $d\ge 2$, but no details are given. (In a $d$-ary heap, each node has (at most) $d$ children.)

We present two simple $O(dk)$ algorithms for selecting the $k$ smallest items from a $d$-ary min-heap. The first is a simple modification of the algorithm for the binary case. The second in a simple reduction from the $d$-ary case to the binary case.

Algorithm $\SOFTSELECT(r)$ of Figure~\ref{A2} can be easily adapted to work on $d$-ary heaps. We simply insert the $d$ children of an extracted item, or an item that becomes corrupt, into the soft heap. If we again let~$I$ be the number of items inserted into the sort heap, and~$C$ be the number of items that become corrupt, we get $I<d(k+C)$ and $C<k+\varepsilon I$, and hence
\[ C \;<\; \frac{1+d\varepsilon}{1-d\varepsilon}\,k \quad,\quad I \;<\;d\left(1+\frac{1+d\varepsilon}{1-d\varepsilon}\right)k\;,\]
provided that $\varepsilon<\frac{1}{d}$, e.g., $\varepsilon=\frac{1}{2d}$. The algorithm then performs $O(dk)$ \INSERT\ operations, each with an amortized cost of $O(1)$, and $k-1$ \EXTRACTMIN\ operations, each with an amortized cost of $O(\log\frac{1}{\varepsilon})=O(\log d)$. The total running time is therefore $O(dk)$. (Note that it is important here to use the soft heap implementation of \cite{KTZ13}, with an $O(1)$ amortized cost of \INSERT.)

An alternative $O(dk)$ algorithm for $d$-ary heaps, for any $d\ge 2$, can be obtained by a simple reduction from $d$-ary heaps to $3$-ary (or binary) heaps using a process that we call ``on-the-fly ternarization via heapification''. We use the well-known fact that an array of~$d$ items can be \emph{heapified}, i.e., converted into a binary heap, in $O(d)$ time. (See Williams \cite{Williams64} or Cormen et al.\ \cite{CoLeRiSt09}.) We describe this alternative approach because we think it is interesting, and because we use it in the next section to obtain an algorithm for general heap-ordered trees, i.e., trees in which different nodes may have different degrees, and the degrees of the nodes are not necessarily bounded by a constant.
}

In a $d$-ary heap, each item $e$ has (up to) $d$ children $e.\CHILD[1],\ldots, e.\CHILD[d]$. We construct a \emph{ternary} heap on the same set of items in the following way. We \emph{heapify} the $d$ children of~$e$, i.e., construct a \emph{binary} heap whose items are these $d$ children.
This gives each child~$f$ of~$e$ two new children $f.\LEFT$ and $f.\RIGHT$. (Some of these new children are null.) We let $e.\MIDDLE$ be the root of the heap composed of the children of~$e$. Overall, this gives each item~$e$ in the original heap three new children $e.\LEFT$, $e.\MIDDLE$ and $e.\RIGHT$, some of which may be null. Note that $e$ gets its new children $e.\LEFT$ and $e.\RIGHT$ when it and its siblings are heapified. (The names $\LEFT$, $\MIDDLE$ and $\RIGHT$ are, of course, arbitrary.) For an example, see Figure~\ref{F-ternarization}.

\begin{figure}[t]
\begin{center}
\includegraphics[scale=0.5]{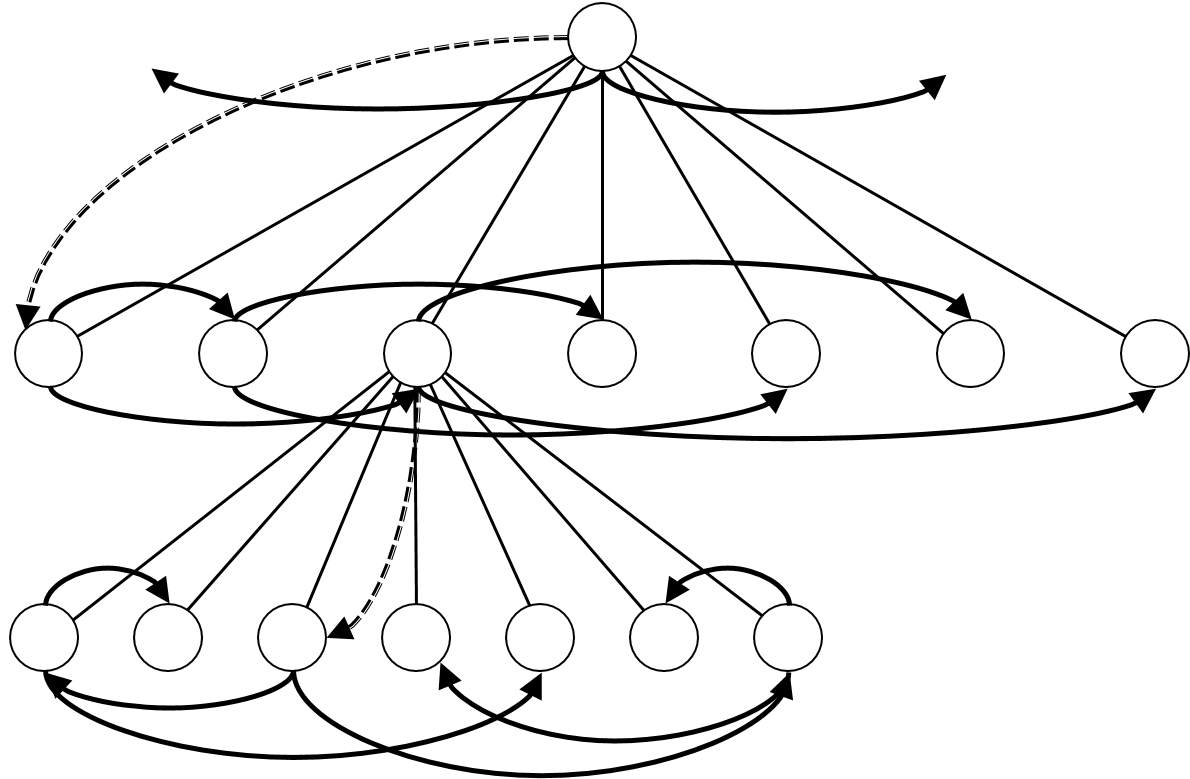}
\end{center}
\vspace*{-10pt}
\caption{On-the-fly ternarization of a 7-ary heap. Thin lines represent the original 7-ary heap. Bold arrows represent new \LEFT\ and \RIGHT\ children. Dashed arrows represent new \MIDDLE\ children.}\label{F-ternarization}
\end{figure}

This heapification process can be carried out on-the-fly while running $\SOFTSELECT(r)$ on the resulting ternary heap. The algorithm starts by inserting the root of the $d$-ary heap, which is also the root of its ternarized version, into the soft heap. When an item~$e$ is extracted from the soft heap, or becomes corrupt, we do not immediately insert its~$d$ original children  into the soft heap. Instead, we \emph{heapify} its~$d$ children, in $O(d)$ time. This assigns~$e$ its middle child $e.\MIDDLE$. Item~$e$ already has its left and right children $e.\LEFT$ and $e.\RIGHT$ defined. The three new children $e.\LEFT$, $e.\MIDDLE$ and $e.\RIGHT$ are now inserted into the soft heap. We call the resulting algorithm $\SOFTSELECTHEAPIFY(r)$.

\begin{theorem}\label{T-dary} Algorithm $\SOFTSELECTHEAPIFY(r)$ selects the $k$ smallest items from a $d$-ary heap with root~$r$ in $O(dk)$ time.
\end{theorem}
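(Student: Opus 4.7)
The plan is to reduce this to the already-analyzed ternary case by showing that (i) the on-the-fly ternarization produces a legitimate $3$-ary heap over the same set of items, and (ii) the extra heapification work amounts to only $O(dk)$ time beyond what the ternary analysis already charges.

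First I would verify that the ternarized structure is heap-ordered. For any original item $e$, its three new children come from two different sources: $e.\LEFT$ and $e.\RIGHT$ are assigned when $e$ itself sits inside the binary heap built on its parent's $d$ children, so by the heap order of that binary heap both satisfy $e.\LEFT.\KEY\ge e.\KEY$ and $e.\RIGHT.\KEY\ge e.\KEY$; and $e.\MIDDLE$ is the root of the binary heap built on $e$'s $d$ original children, each of which is $\ge e$ by the heap order of the input $d$-ary heap, so $e.\MIDDLE.\KEY\ge e.\KEY$ as well. Since ternarization does not add or remove items, the $k$ smallest items in the ternary view are exactly the $k$ smallest items in the input.

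Next I would argue that $\SOFTSELECTHEAPIFY(r)$ behaves, from the soft heap's point of view, exactly like $\SOFTSELECT$ run on the (fully materialized) ternary heap, provided each heapification is triggered \emph{before} we need the children of an item that is extracted or that becomes corrupt. With error parameter $\varepsilon=\frac{1}{2\cdot 3}=\frac{1}{6}<\frac{1}{3}$, the $d$-ary analysis given earlier in this section (specialized to $d=3$) applies verbatim: Lemma~\ref{L-soft-1} gives correctness, and the counting $I<3(k+C)$ together with $C<k+\varepsilon I$ yields $I=O(k)$ and $C=O(k)$. So the ternary heap is never fully built; only $O(k)$ of its nodes are ever touched.

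Finally I would bound the time. The soft heap performs $O(k)$ \INSERT\ operations and $k-1$ \EXTRACTMIN\ operations; with $\varepsilon=\Theta(1)$ (independent of $d$) each costs $O(1)$ amortized in the implementation of~\cite{KTZ13}, contributing $O(k)$ in total. The only new cost is heapification: a heapification of $e$'s $d$ original children is performed exactly once, the first time $e$ is either extracted from $Q$ or marked corrupt (its children are not needed before that moment, and are needed exactly once thereafter). The number of such events is at most the $k-1$ extractions plus the $C=O(k)$ corruptions, hence $O(k)$ heapifications at $O(d)$ each, for a total of $O(dk)$. Adding the $O(k)$ soft heap cost and the $O(k)$ final call to $\SELECT(S,k)$ gives $O(dk)$ overall.

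The main obstacle I anticipate is being precise about the lazy scheduling — making sure that at every point when the algorithm inserts some $e.\LEFT$, $e.\MIDDLE$, or $e.\RIGHT$, the corresponding heapification has already occurred, and making sure each heapification is charged only once even though an item may be both corrupted and later extracted. The clean way to handle this is to attach the heapification of $e$'s children to the unique event that first marks $e$ as extracted-or-corrupt, which fits naturally into the loop of $\SOFTSELECT$ since $C$ (the newly corrupted items) and the extracted item $e$ are exactly the set whose children must be inserted next.
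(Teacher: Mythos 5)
Your proposal is correct and follows essentially the same route as the paper's proof, which is only a brief sketch stating that ``simple adaptations of Lemmas~\ref{L-soft-1} and~\ref{L-soft-2}'' bound the non-heapification cost by $O(k)$ and that the $O(k)$ heapifications cost $O(dk)$. You supply the details the paper omits — verifying heap order of the ternarized tree, fixing $\varepsilon=\frac{1}{6}$ so the ternary counting gives $I,C=O(k)$, and charging each heapification to the unique first extracted-or-corrupt event — all of which is consistent with the paper's intent.
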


\begin{proof} Algorithm $\SOFTSELECTHEAPIFY(r)$ essentially works on a ternary version of the input $d$-ary heap constructed on the fly. Simple adaptations of Lemmas~\ref{L-soft-1} and~\ref{L-soft-2} show that the total running time, excluding the heapifications' cost, is $O(k)$. As only $O(k)$ heapifications are performed, the cost of all heapifications is $O(dk)$, giving the total running time of the algorithm.
\end{proof}

It is also possible to \emph{binarize} the input heap on the fly. We first ternarize the heap as above. We now convert the resulting ternary tree into a binary tree using the standard first child, next sibling representation. This converts the ternary heap into a binary heap, \emph{if} the three children of each item are sorted. During the ternarization process, we can easily make sure that the three children of each item appear in sorted order, swapping children if necessary, so we can apply this final binarization step.

\subsection{Selection from general heap-ordered trees}\label{S-general}

Algorithm $\SOFTSELECTHEAPIFY(r)$ works, of course, on arbitrary heap-ordered trees in which different nodes have different degrees.
Algorithm $\SOFTSELECT(r)$, on the other hand, is not easily adapted to work on general heap-ordered trees, as it is unclear how to set the error parameter $\varepsilon$ to obtain an optimal running time. To bound the running time of $\SOFTSELECTHEAPIFY(r)$ on an arbitrary heap-ordered tree, we introduce the following definition.

\begin{definition}[$D(T,k)$] Let $T$ be a (possibly infinite) rooted tree and let $k\ge 1$. Let $D(T,k)$ be the maximum \emph{sum of degrees} over all subtrees of~$T$ of size~$k$ rooted at the root of~$T$. (The degrees summed are in~$T$, not in the subtree.)
\end{definition}

For example, if $T_d$ is an infinite $d$-ary tree, then $D(T_d,k)=dk$, as the sum of degrees in each subtree of~$T_d$ containing~$k$ vertices is $dk$. For a more complicated example, let $T$ be the infinite tree in which each node at level~$i$ has degree~$i+2$, i.e., the root has two children, each of which has three children, etc. Then, $D(T,k)= \sum_{i=2}^{k+1} i = k(k+3)/2$, where the subtrees achieving this maximum are paths starting at the root.
A simple adaptation of Theorem~\ref{T-dary} gives:

\begin{theorem}\label{T-general-tree} Let $T$ be a heap-ordered tree with root~$r$. Algorithm $\SOFTSELECTHEAPIFY(r)$ selects the $k$-th smallest item in~$T$, and the set of~$k$ smallest items in~$T$, in $O(D(T,3k))$ time.
\end{theorem}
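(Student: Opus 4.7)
The plan is to mimic the analysis of Theorem~\ref{T-dary}, separating the cost into (i) the soft-heap operations themselves and (ii) the extra work spent heapifying original children in $T$. I would fix the error parameter of the soft heap at a small absolute constant, say $\varepsilon = 1/9$.

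First I would argue that correctness (the analogue of Lemma~\ref{L-soft-1}) transfers verbatim. The algorithm really runs on the ternarized tree $T'$ built on the fly, and $T'$ inherits the heap order of $T$; the barrier argument of Lemma~\ref{L-soft-1} uses only this property, not uniform degree. Next I would bound the set $H$ of items of $T$ that get heapified, that is, the items extracted from, or corrupted in, the soft heap. Each heapification of an item $e$ inserts at most three items ($e.\LEFT$, $e.\MIDDLE$, $e.\RIGHT$) into the soft heap, so the number of insertions satisfies $I \le 1 + 3|H|$. Combined with $C < k + \varepsilon I$ on the number of corruptions, and with $|H| \le (k-1) + C$, this yields $|H| \le 3k$ for $\varepsilon$ small enough (e.g.\ $\varepsilon = 1/9$), exactly as in the proof of Lemma~\ref{L-soft-2}.

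The main step, and the one I expect to be the main obstacle, is to show that $H$ is a subtree of $T$ rooted at $r$, so that the cost of all heapifications, $\sum_{e \in H} O(\deg_T(e))$, can be charged directly to $D(T, |H|) \le D(T, 3k)$ by the very definition of $D$. To prove this, take any $e \in H$ with $e \ne r$. Then $e$ entered the soft heap as $e'.\LEFT$, $e'.\MIDDLE$, or $e'.\RIGHT$ for some $e' \in H$. If $e = e'.\MIDDLE$, then $e$ is a $T$-child of $e'$, so the $T$-parent of $e$ lies in $H$. If $e = e'.\LEFT$ or $e = e'.\RIGHT$, then $e$ and $e'$ are $T$-siblings sharing a common $T$-parent $p$; but $e'.\LEFT$ and $e'.\RIGHT$ were defined precisely when $p$'s $T$-children were heapified, so $p \in H$, and again the $T$-parent of $e$ lies in $H$. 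Hence $H$ is connected in $T$ and contains $r$, so $H$ is a candidate in the maximization defining $D(T, |H|)$.

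Putting the pieces together: by the adapted Lemma~\ref{L-soft-2} the algorithm performs $O(k)$ soft-heap operations, each of $O(1)$ amortized cost because $\varepsilon = \Theta(1)$; the on-the-fly heapifications contribute $\sum_{e\in H} O(\deg_T(e)) = O(D(T, 3k))$; and the final standard selection of the $k$ smallest items from the list $S$ of size $O(k)$ adds another $O(k)$. Since $D(T, 3k) \ge |H| - 1 = \Omega(k)$ whenever the subtree $H$ has more than one vertex, the total is $O(D(T, 3k))$, as required. The subtree-of-$T$ claim for $H$ is the only genuinely new ingredient; everything else is a direct transcription of the binary and $d$-ary analyses.
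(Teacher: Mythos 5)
Your proof is correct and follows essentially the same approach as the paper's: bound the number $|H|$ of heapified items by $3k$ via the standard soft-heap counting argument, observe that $H$ is a subtree of $T$ rooted at $r$, and charge the heapification cost to $D(T,3k)$. The only cosmetic difference is that the paper runs the binarized version with $\varepsilon=\frac{1}{6}$ while you run the ternarized version with $\varepsilon=\frac{1}{9}$; both give $|H|<3k$. Your explicit case analysis showing that $H$ is connected in $T$ (distinguishing whether an item entered as a $\MIDDLE$ child, which gives its $T$-parent directly, or as a $\LEFT/\RIGHT$ child, which gives a $T$-sibling whose shared $T$-parent must already have been heapified) supplies a detail that the paper asserts without proof, and it is exactly the right argument.
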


\begin{proof} We use the on-the-fly binarization and a soft heap with $\varepsilon=\frac{1}{6}$. The number of corrupt items is less than~$2k$. The number of extracted items is less than~$k$. Thus, the algorithm needs to heapify the children of less than~$3k$ items that form a subtree~$T'$ of the original tree~$T$. The sum of the degrees of these items is at most $D(T,3k)$, thus the total time spent on the heapifications, which dominates the running time of the algorithm, is $O(D(T,3k))$. We note that $D(T,3k)$ can be replaced by $D(T,(2+\delta)k)$, for any $\delta>0$, by choosing $\varepsilon$ small enough.
\end{proof}

\begin{theorem} Let $T$ be a heap-ordered tree and let $k\ge 1$. Any comparison-based algorithm for selecting the $k$-th smallest item in~$T$ must perform at least $D(T,k-1)-(k-1)$ comparisons on some inputs.
\end{theorem}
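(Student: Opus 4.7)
The plan is an adversary argument based on the obvious ``hard'' instance. I would fix a subtree $T'$ of $T$ of size $k-1$ rooted at the root of $T$ attaining the maximum in the definition of $D(T,k-1)$, and let $V'$ be the set of \emph{boundary children}, i.e., vertices of $T$ lying outside $T'$ whose parent lies in $T'$. A simple counting step gives $|V'| = D(T,k-1) - (k-2)$: the sum of degrees of nodes in $T'$ counts each parent--child edge of $T$ emanating from a node of $T'$ exactly once, while those edges that remain inside $T'$ biject with the $k-2$ non-root vertices of $T'$. Setting $q := |V'|$, the target lower bound becomes exactly $q-1$.

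Second, I would restrict attention to a family $\mathcal{F}$ of key assignments: fix an arbitrary heap-consistent labeling of $T'$ with keys $\{1,\dots,k-1\}$; assign to $V'$ an arbitrary bijection with $\{k,\dots,k+q-1\}$; and give every remaining item a very large key, respecting heap order within each subtree rooted at a node of $V'$. For every assignment in $\mathcal{F}$, the $k$-th smallest item is precisely the element of $V'$ that receives key $k$, i.e., the minimum of $V'$, so the selection problem on $\mathcal{F}$ reduces to identifying the minimum of $V'$.

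The adversary answers the algorithm's queries consistently with $\mathcal{F}$: comparisons inside $T'$ use the fixed labeling; comparisons that straddle the tripartition into $T'$, $V'$, and descendants-of-$V'$ have a forced answer, independent of how the keys on $V'$ are permuted; and comparisons between two items of $V'$ are answered adaptively using a standard minimum-finding adversary. The main step is then to show that at termination the graph $G_{V'}$ formed by the $V'$-internal comparisons must be connected: if it split into components $V'_1$ and $V'_2$, then swapping the multisets of keys assigned to $V'_1$ and $V'_2$ while preserving the relative order within each yields another assignment in $\mathcal{F}$ giving identical answers to every comparison asked so far but moving the minimum of $V'$ from one component to the other. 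Hence the algorithm cannot yet commit to a correct output. A connected graph on $q$ vertices has at least $q-1$ edges, giving at least $|V'| - 1 = D(T,k-1) - (k-1)$ comparisons in total.

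The only delicate point — and what I expect to be the main obstacle in writing the proof formally — is justifying that the two non-$V'$-internal classes of comparisons genuinely contribute no information about the internal ordering of $V'$. This reduces to the observation that in every element of $\mathcal{F}$, each key in $T'$ is strictly smaller than every key in $V'$ and every remaining key is strictly larger, so the outcome of such a comparison is the same across all of $\mathcal{F}$ and does not depend on how the keys are permuted inside $V'$. With that in hand, only edges of $G_{V'}$ can certify inequalities among pairs of $V'$, and the component-swap argument above forces the claimed lower bound.
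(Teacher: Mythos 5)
Your proof is correct and takes essentially the same route as the paper: fix the maximizing subtree $T'$ of size $k-1$, observe that the problem reduces to finding the minimum among the $D(T,k-1)-(k-2)$ boundary children, and invoke the $(q-1)$-comparison lower bound for minimum-finding. The paper states this as a one-line sketch (``no information on the order of these items is implied by the heap order''), while you carefully formalize the adversary, the reduction family $\mathcal{F}$, and the forced-answer argument for non-$V'$-internal comparisons — a more rigorous version of the same idea.
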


\begin{proof} Let $T'$ be the subtree of~$T$ of size~$k-1$ that achieves the value~$D(T,k-1)$, i.e., the sum of the degrees of the nodes of~$T'$ is $D(T,k-1)$. Suppose the $k-1$ items of~$T'$ are the $k-1$ smallest items in~$T$. The nodes of~$T'$ have at least $D(T,k-1)-(k-2)$ children that are not in~$T'$. The $k$-th smallest item is the minimum item among these items, and no information on the order of these items is implied by the heap order of the tree. Thus, finding the $k$-th smallest item in this case requires at least $D(T,k-1)-(k-1)$ comparisons.
\end{proof}

\section{Selection from row-sorted matrices}\label{S-matrix}

\BLUE{
In this section we present algorithms for selecting the $k$ smallest items from a row-sorted matrix, or equivalently from a collection of sorted lists. Our results simplify and extend results of Frederickson and Johnson \cite{FrJo82}. The algorithms presented in this section use our \SOFTSELECT\ algorithm for selection from a binary min-heap presented in Section~\ref{S-binary}. (Frederickson's \cite{Frederickson93} algorithm could also be used, but the resulting algorithms would become much more complicated, in particular more complicated than the algorithms of Frederickson and Johnson \cite{FrJo82}.)
}

\BLUE{
In Section~\ref{S-mk} we give an $O(m+k)$ algorithm, where $m$ is the number of rows, which is optimal if $m=O(k)$. In Section~\ref{S-mlogkm} we give an $O(m\log\frac{k+m}{m})$ algorithm which is optimal for $k=\Omega(m)$. These results match results given by Frederickson and Johnson \cite{FrJo82}. In Sections~\ref{S-Slogni} and~\ref{S-Slogki} we give two new algorithms that improve in some cases over the previous algorithms.
}

\subsection{An $O(m+k)$ algorithm}\label{S-mk}

\BLUE{
A sorted list may be viewed as a heap-sorted path, i.e., a $1$-ary heap. We can convert a collection of~$m$ sorted lists into a (degenerate) binary heap by building a binary tree whose leaves are the first items in the lists. The values of the $m-1$ internal nodes in this tree are set to $-\infty$. Each item in a list will have one real child, its successor in the list, and a dummy child with value $+\infty$. To find the~$k$ smallest items in the lists, we simply find the $m+k-1$ smallest items in the binary heap. This can be done in $O(m+k)$ time using algorithm \SOFTSELECT\ of Section~\ref{S-binary}. More directly, we can use the following straightforward modification of algorithm \SOFTSELECT. Insert the $m$ first items in the lists into a soft heap. Perform $k-1$ iterations in which an item with minimum (corrupt) key is extracted. Insert into the soft heap the child of the item extracted as well as the children of all the items that became corrupt following the \EXTRACTMIN\ operation.}

\BLUE{Alternatively, we can convert the $m$ sorted lists into a heap-ordered tree $T_{m,1}$ by adding a root with value $-\infty$ that will have the $m$ first items as its children. All other nodes in the tree will have degree~$1$.
It is easy to see that $D(T_{m,1},k)=m+k-1$. By Theorem~\ref{T-general-tree} we again get an $O(m+k)$ algorithm. We have thus presented three different proofs of the following theorem.}

\begin{theorem}\label{T-mk} Let $A$ be a row-sorted matrix containing~$m$ rows. Then, the $k$-th smallest item in~$A$, and the set of~$k$ smallest items in~$A$, can be found in $O(m+k)$ time.
\end{theorem}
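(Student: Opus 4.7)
The plan is to reduce selection from a row-sorted matrix to a problem already solved in Section~\ref{S-heap}. Each of the $m$ sorted rows is itself a heap-ordered path (a $1$-ary heap). The cleanest way to unify them is to introduce a single virtual root with key $-\infty$ whose children are the first items of the $m$ rows; this produces a heap-ordered tree $T_{m,1}$ in which the root has degree $m$ and every other node has degree $1$. Finding the $k$ smallest items of $A$ is then equivalent to finding the $k+1$ smallest items in $T_{m,1}$, since the virtual root is always the smallest and is discarded.

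To bound the running time I would appeal directly to Theorem~\ref{T-general-tree}. Any rooted subtree of $T_{m,1}$ of size $s\ge 1$ must contain the root, which contributes degree $m$, together with $s-1$ non-root vertices each of degree $1$. Hence $D(T_{m,1},s)=m+s-1$ for every $s\ge 1$, and Theorem~\ref{T-general-tree} immediately yields a running time of $O(D(T_{m,1},3(k+1)))=O(m+k)$.

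A slightly more direct alternative is to skip the virtual root and run a forest version of $\SOFTSELECT$: initialize a soft heap with the $m$ row-heads, then repeat $k-1$ times the step ``extract the minimum $e$; if $e$ is not corrupt add it to the corrupt set; for each newly corrupt item, insert into the soft heap and append to the output list $S$ the successor of that item in its own row (if any).'' A final linear-time standard selection picks the $k$ smallest elements of $S$. The one point that needs checking is the bound on the number of insertions $I$, and this is the only mild obstacle; it is handled by a direct replay of Lemma~\ref{L-soft-2}. Every insertion is either one of the $m$ initial ones, the successor of one of the at most $k-1$ extracted items, or the successor of one of the $C$ items that became corrupt, so $I \le m + (k-1) + C$, while the soft-heap guarantee with $\varepsilon = 1/4$ gives $C < k + \varepsilon I$. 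Solving these two inequalities yields $C, I = O(m+k)$. Each soft-heap operation costs $O(1)$ amortized, and the final linear selection on $S$ is also $O(m+k)$, so the total running time is $O(m+k)$, as claimed.
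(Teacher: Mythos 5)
Your proposal is correct and follows essentially the same route as the paper, which itself sketches both of your arguments: the reduction to a heap-ordered tree $T_{m,1}$ with a $-\infty$ root of degree $m$ (invoking $D(T_{m,1},k)=m+k-1$ and Theorem~\ref{T-general-tree}), and the direct forest-style modification of \SOFTSELECT\ with the insertion count bounded by replaying Lemma~\ref{L-soft-2}. Both your reductions and your counting argument $I\le m+(k-1)+C$, $C<k+\varepsilon I$ are sound.
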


We refer to the algorithm of Theorem~\ref{T-mk} as $\MATSELECT_1(A,k)$.
The $O(m+k)$ running time of $\MATSELECT_1(A,k)$ is asymptotically optimal if $k=O(m)$, as $\Omega(m)$ is clearly a lower bound; each selection algorithm must examine at least one item in each row of the input matrix.

\subsection{An $O(m\log \frac{k}{m})$ algorithm, for $k\ge 2m$}\label{S-mlogkm}

\begin{figure}[t]
\begin{center}
\includegraphics[scale=0.5]{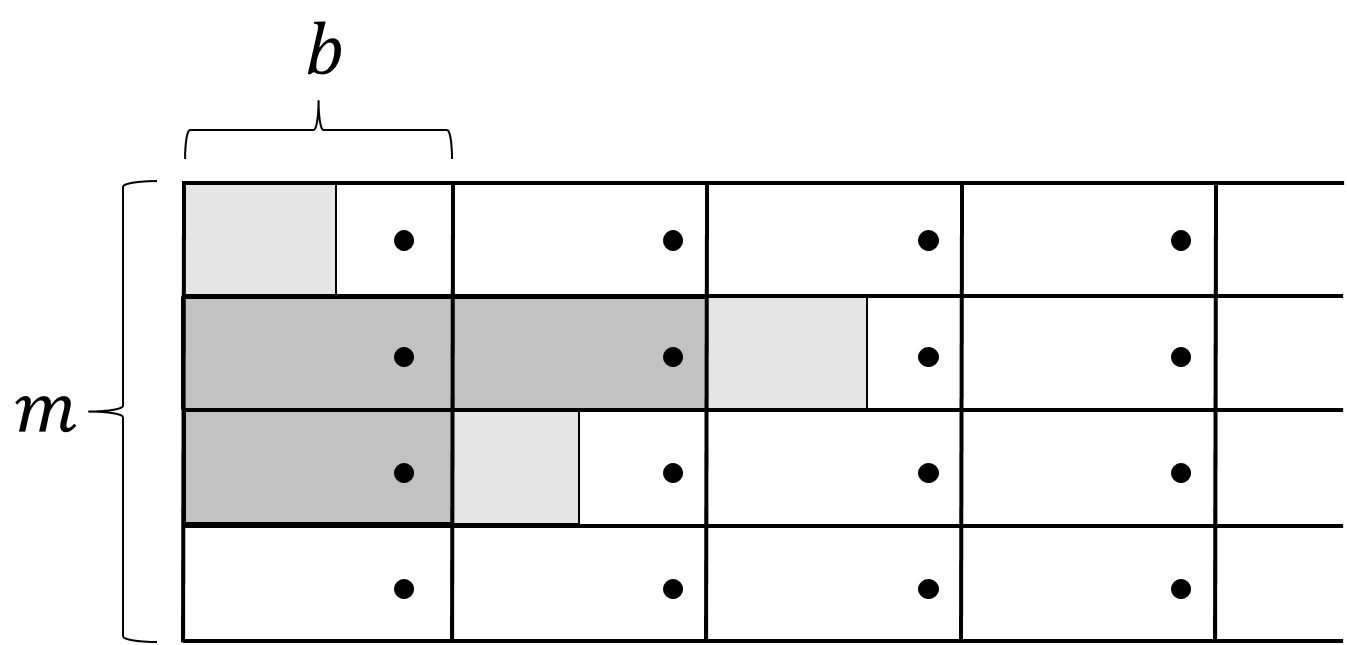}
\end{center}
\vspace*{-10pt}
\caption{Partitioning the items in each row to blocks of size~$b$. Block representatives are shown as small filled circles. The shaded regions contains the $k$ smallest items. The darkly shaded region depicts blocks all of whose items are among the $k$ smallest.}\label{F-mlogkm}
\end{figure}

\begin{figure}[t]
\begin{center}
\parbox{3.5in}{
\begin{algorithm}[H]
\DontPrintSemicolon
\NoCaptionOfAlgo
\caption{$\MATSELECT_2(A,k)$:}
$m\gets \NUMROWS(A)$ \;
\eIf{$k\le 2m$}
    {   \Return{$\MATSELECT_1(A,k)$} }
    {   $b\gets \lfloor k/(2m) \rfloor$ \;
        $A'\gets \JUMP(A,b)$ \;
        $K \gets \MATSELECT_1(A',m)$ \;
        $A''\gets \SHIFT(A,bK)$ \;
        \Return{$bK + \MATSELECT_2(A'',k-bm)$} \;
    }
\end{algorithm}

}
%
\caption{\label{F-MATSELECT2a}Selecting the~$k$ smallest items from a row-sorted matrix~$A$. (Implicit handling of submatrices passed to recursive calls.) }
\vspace*{15pt}

\parbox{4in}{
\begin{algorithm}[H]
\DontPrintSemicolon
\NoCaptionOfAlgo
\caption{$\MATSELECT_2(\langle A,c,D\rangle,k)$:}
$m\gets \NUMROWS(A)$ \;
\eIf{$k\le 2m$}
    {   \Return{$\MATSELECT_1(\langle A,c,D\rangle,k)$} }
    {   $b\gets \lfloor k/(2m) \rfloor$ \;
        $K \gets \MATSELECT_1(\langle A,bc,D\rangle,m)$ \;
        \Return{$bK + \MATSELECT_2(\langle A,c,D+bcK\rangle,k-bm)$} \;
    }
\end{algorithm}

}
\caption{\label{F-MATSELECT2b}Selecting the~$k$ smallest items from a row-sorted matrix~$A$. (Explicit handling of submatrices passed to recursive calls.)}
\end{center}
\end{figure}

We begin with a verbal description of the algorithm. Let~$A$ be the input matrix and let $k\ge 2m$.
Partition each row of the matrix~$A$ into \emph{blocks} of size $b=\left\lfloor\frac{k}{2m}\right\rfloor$. The \emph{last} item in each block is the \emph{representative} of the block.
Consider the (yet unknown) distribution of the~$k$ smallest items among the~$m$ rows of the matrix. Let $k_i$ be the number of items in the $i$-th row that are among the~$k$ smallest items in the whole matrix. These $k_i$ items are clearly the first $k_i$ items of the $i$-th row. They are partitioned into a number of full blocks, followed possibly by one partially filled block. (For an example, see Figure~\ref{F-mlogkm}.) The number of items in partially filled blocks is at most $m\left\lfloor\frac{k}{2m}\right\rfloor$. Thus, the number of filled blocks is at least
\[ \frac{ k-m\left\lfloor\frac{k}{2m}\right\rfloor } { \left\lfloor\frac{k}{2m}\right\rfloor } \;\ge\; m \;. \]
Apply algorithm $\MATSELECT_1$ to select the smallest~$m$ block representatives. This clearly takes only $O(m)$ time. (Algorithm $\MATSELECT_1$ is applied on the implicitly represented matrix~$A'$ of block representatives.)
All items in the~$m$ blocks whose representatives were selected are among the~$k$ smallest items of the matrix. The number of such items is $mb=m\left\lfloor\frac{k}{2m}\right\rfloor \ge \frac{k}{4}$, as $k\ge 2m$. These items can be removed from the matrix. All that remains is to select the $k-mb$ smallest remaining items using a recursive call to the algorithm.
In each recursive call (or iteration), the total work is $O(m)$. The number of items to be selected drops by a factor of at least $3/4$. Thus after at most $\log_{4/3}\frac{k}{2m}=O(\log\frac{k}{m})$ iterations,~$k$ drops below $2m$ and then $\MATSELECT_1$ is called to finish the job in $O(m)$ time.

Pseudo-code of the algorithm described above, which we call $\MATSELECT_2(A,k)$ is given in Figure~\ref{F-MATSELECT2a}. The algorithm returns an array $K=(k_1,k_2,\ldots,k_m)$, where $k_i$ is the number of items in the $i$-th row that are among the~$k$ smallest items of the matrix. The algorithm uses a function $\NUMROWS(A)$ that returns the number of rows of a given matrix, a function $\JUMP(A,b)$ that returns an (implicit) representation of a matrix $A'$ such that $A'_{i,j}=A_{i,bj}$, for $i,j\ge 1$, and a function $\SHIFT(A,K)$ that returns an (implicit) representation of a matrix $A''$ such that $A''_{i,j}=A_{i,j+k_i}$, for $i,j\ge 1$.

In Figure~\ref{F-MATSELECT2b} we eliminate the use of \JUMP\ and \SHIFT\ and make everything explicit. The input matrix is now represented by a triplet $\langle A,c,D\rangle$, where $A$ is a matrix, $c\ge 1$ is an integer, and $D=(d_1,d_2,\ldots,d_m)$ is
an array of non-negative integral \emph{displacements}. $\MATSELECT_2(\langle A,c,D\rangle,k)$ selects the~$k$ smallest items in the matrix $A'$ such that $A'_{i,j} = A_{i,cj+d_i}$, for $i,j\ge 1$. To select the~$k$ smallest items in~$A$ itself, we simply call $\MATSELECT_2(\langle A,1,{\bf 0}\rangle,k)$, where ${\bf 0}$ represents an array of~$m$ zeros. The implementation of $\MATSELECT_2(\langle A,c,D\rangle,k)$ in Figure~\ref{F-MATSELECT2b} is recursive. It is easy to convert it into an equivalent iterative implementation.

\begin{theorem}\label{T-mlogkm} Let $A$ be a row-sorted matrix containing~$m$ rows and let $k\ge 2m$. Algorithm $\MATSELECT_2(A,k)$ selects the~$k$ smallest items in~$A$ in $O(m\log\frac{k}{m})$ time.
\end{theorem}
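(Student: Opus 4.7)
The plan is to prove correctness and the running time bound separately, with both proofs following the verbal reasoning already given before the pseudocode. I will induct on the value of $k$ (or equivalently on the recursion depth). The base case $k \le 2m$ is handled by $\MATSELECT_1$, which runs in $O(m+k) = O(m)$ time and is correct by Theorem~\ref{T-mk}. So the interesting case is $k \ge 2m+1$ with $b = \lfloor k/(2m)\rfloor \ge 1$.

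For correctness, the key combinatorial fact I would establish first is that at least $m$ blocks lie entirely within the true set of $k$ smallest items. If $k_i$ denotes the (unknown) number of $k$ smallest items in row~$i$, then row~$i$ contains $\lfloor k_i/b\rfloor$ fully filled blocks among the $k$ smallest, plus possibly a partial block. Thus the total number of full blocks is at least
\[
\sum_{i=1}^m \lfloor k_i/b\rfloor \;\ge\; \frac{\sum_i k_i - mb}{b} \;=\; \frac{k-mb}{b} \;\ge\; \frac{k - m\cdot\frac{k}{2m}}{k/(2m)} \;=\; m.
\]
Each such full block's representative (its last item) is therefore $\le$ the $k$-th smallest item of~$A$. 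Consequently, the $m$ smallest block representatives, selected by $\MATSELECT_1$ on the implicit block-representative matrix~$A'$, all have value $\le$ the $k$-th smallest item, and so all $mb$ items of the corresponding $m$ blocks belong to the $k$ smallest items of~$A$. These items consist precisely of the first $bK_i$ entries of row~$i$ (where $K = (K_1,\dots,K_m)$ is returned by the inner call), and they can therefore be removed from consideration. After this removal, the remaining $k-mb$ smallest items in~$A$ coincide with the $k-mb$ smallest items of the displaced matrix $A'' = \SHIFT(A,bK)$, which is exactly what the recursive call computes; the inductive hypothesis then completes the correctness argument.

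For the running time, each non-base-case invocation performs $O(m)$ work: the call $\MATSELECT_1(A',m)$ selects the $m$ smallest items from a matrix with $m$ rows in $O(m+m)=O(m)$ time by Theorem~\ref{T-mk}, and the \JUMP/\SHIFT\ operations (or equivalently the $\langle A,c,D\rangle$ bookkeeping in Figure~\ref{F-MATSELECT2b}) are handled implicitly in $O(m)$ time. It remains to bound the number of iterations. Since $k \ge 2m$, we have $\lfloor k/(2m)\rfloor \ge k/(4m)$, so the number of items removed satisfies $mb \ge k/4$, and the new parameter $k' = k - mb \le \tfrac{3}{4}k$. Hence $k$ decreases by a constant factor at every iteration, so after $O(\log_{4/3}(k/(2m))) = O(\log(k/m))$ iterations the value of $k$ falls below $2m$ and the base case is invoked for a final $O(m)$ cost. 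Summing the $O(m)$ per-iteration cost over $O(\log(k/m))$ iterations yields the total bound $O(m\log(k/m))$.

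The main obstacle in executing this plan is making the "at least $m$ full blocks" counting argument genuinely rigorous, since it relies on the unknown distribution $(k_1,\dots,k_m)$; the inequality above handles this, but I would want to double-check the floor manipulations when $k$ is close to the threshold $2m$. Everything else is routine: the recursion depth bound is a standard geometric-decay calculation, and the per-call work analysis is immediate from Theorem~\ref{T-mk}.
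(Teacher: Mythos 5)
Your proof is correct and follows essentially the same route as the paper's own argument: the same counting of full blocks via $\sum_i \lfloor k_i/b\rfloor \ge (k-mb)/b \ge m$, the same observation that the $m$ smallest representatives certify $mb \ge k/4$ items as belonging to the answer, and the same geometric-decay bound of $O(\log\frac{k}{m})$ iterations at $O(m)$ work each. No gaps; the floor manipulations you flagged check out since $k\ge 2m$ gives $b\ge 1$ and $\lfloor k/(2m)\rfloor \ge k/(4m)$.
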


Frederickson and Johnson \cite{FrJo82} showed that the $O(m\log\frac{k}{m})$ running time of $\MATSELECT_2(A,k)$ is optimal, when $k\ge 2m$. A simple proof of this claim can also be found in Section~\ref{S-lower}.


\subsection{An $O(m+\sum_{i=1}^m \log n_i)$ algorithm}\label{S-Slogni}

Assume now that the $i$-th row of~$A$ contains only $n_i$ items. We assume that $n_i\ge 1$, as otherwise, we can simply remove the $i$-th row. We can run algorithms $\MATSELECT_1$ and $\MATSELECT_2$ of the previous sections by adding dummy $+\infty$ items at the end of each row, but this may be wasteful. We now show that a simple modification of $\MATSELECT_2$, which we call $\MATSELECT_3$, can solve the selection problem in $O(m+\sum_{i=1}^m \log n_i)$ time. We focus first on the number of comparisons performed by the new algorithm.

At the beginning of each iteration, $\MATSELECT_2$ sets the block size to $b=\left\lfloor\frac{k}{2m}\right\rfloor$. If $n_i< b$, then the last item in the first block of the $i$-th row is $+\infty$. Assuming that $k\le \sum_{i=1}^m n_i$, no representatives from the $i$-th row will be selected in the current iteration. There is therefore no point in considering the $i$-th row in the current iteration.
Let $m'$ be the number of \emph{long} rows, i.e., rows for which $n_i\ge \left\lfloor\frac{k}{2m}\right\rfloor$.
We want to reduce the running time of the iteration to $O(m')$ and still reduce~$k$ by some constant factor.

The total number of items in the short rows is less than $m\left\lfloor\frac{k}{2m}\right\rfloor\le \frac{k}{2}$. The long rows thus contain at least $\frac{k}{2}$ of the~$k$ smallest items of the matrix. We can thus run an iteration of $\MATSELECT_2$ on the long rows with $k'=\frac{k}{2}$. In other words, we adjust the block size to $b'=\left\lfloor\frac{k'}{2m'}\right\rfloor=\left\lfloor\frac{k}{4m'}\right\rfloor$ and use $\MATSELECT_1$ to select the~$m'$ smallest representatives. This identifies $b'm'\ge \frac{k'}{4}\ge \frac{k}{8}$ items as belonging to the~$k$ smallest items in~$A$. Thus, each iteration takes $O(m')$ time and reduces~$k$ by a factor of at least $\frac{7}{8}$.

In how many iterations did each row of the matrix participate? Let $k_j$ be the number of items still to be selected at the beginning of iteration~$j$. Let $b_j =\left\lfloor\frac{k_j}{2m}\right\rfloor$ be the threshold for long rows used in iteration~$j$.
As $k_j$ drops exponentially, so does $b_j$. Thus, row $i$ participates in at most $O(\log n_i)$ of the \emph{last} iterations of the algorithm. The total number of comparisons performed is thus at most $O(m+\sum_{i=1}^m \log n_i)$, as claimed.

To show that the algorithm can also be implemented to run in $O(m+\sum_{i=1}^m \log n_i)$ time, we need to show that we can quickly identify the rows that are long enough to participate in each iteration. To do that, we sort $\lceil \log n_i\rceil$ using bucket sort. This takes only $O(m+\max_i \lceil \log n_i\rceil)$ time. When a row loses some of its items, it is easy to move it to the appropriate bucket in $O(1)$ time. In each iteration we may need to examine rows in one bucket that turn out not to be long enough, but this does not affect the total $O(m+\sum_{i=1}^m \log n_i)$ running time of the algorithm.

\begin{theorem}\label{T-Slogni} Let $A$ be a row-sorted matrix containing~$m$ rows, and let $N=(n_1,n_2,\ldots,n_m)$, where $n_i\ge 1$ be the number of items in the $i$-th row of the matrix, for $1\le i\le m$. Let $k\le \sum_{i=1}^m n_i$. Algorithm $\MATSELECT_3(A,N,k)$ selects the~$k$ smallest items in~$A$ in $O(m+\sum_{i=1}^m \log n_i)$ time.
\end{theorem}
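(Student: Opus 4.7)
The plan is to formalize the argument sketched in the text preceding the theorem, in four logical steps. First I would set up the invariant that at the start of each iteration we still need to select the $k$ smallest items from the residual matrix $A$, where $k$ is the current value of the selection parameter. Let $b=\lfloor k/(2m)\rfloor$, call a row \emph{long} if $n_i\ge b$ and \emph{short} otherwise, and let $m'$ be the number of long rows. The key structural observation is that the short rows contain at most $m\lfloor k/(2m)\rfloor\le k/2$ items altogether, so at least $k/2$ of the $k$ smallest items of $A$ must lie in long rows. Consequently, selecting the $k'=\lceil k/2\rceil$ smallest items of the submatrix of long rows is a well-posed subproblem whose solution is a prefix of the solution on $A$.

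Next I would analyze a single iteration. Apply one \emph{step} of $\MATSELECT_2$ restricted to the long rows with block size $b'=\lfloor k'/(2m')\rfloor=\lfloor k/(4m')\rfloor$: use $\MATSELECT_1$ (Theorem~\ref{T-mk}) to pick the $m'$ smallest block representatives, and append these $b'm'$ items to the output. By the same bookkeeping as in Theorem~\ref{T-mlogkm}, we have $b'm'\ge k'/2\ge k/4$ when $k\ge 2m'$, so $k$ drops to at most $(3/4)k$ (more carefully, a constant factor less than one). The cost of the iteration is $O(m')$ by Theorem~\ref{T-mk}. When the current $k$ falls below $2m'$ we invoke $\MATSELECT_1$ on the surviving long submatrix, which finishes in $O(m')$ time.

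The central charging argument is now straightforward. Let $k_j$ denote the remaining value of $k$ at the start of iteration $j$; since $k_j$ decreases geometrically, so does the threshold $b_j=\lfloor k_j/(2m)\rfloor$. Row $i$ participates in iteration $j$ only when $n_i\ge b_j$ (taking into account that $n_i$ can only decrease across iterations, which is fine as $b_j$ also decreases), and the set of such $j$'s therefore forms a suffix of iterations whose length is $O(\log n_i)$. Adding an $O(1)$ bootstrap cost per row, the total number of comparisons is
\[ O\Bigl(m+\sum_{i=1}^{m}\log n_i\Bigr), \]
as claimed.

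The only remaining obstacle, which will be the main implementation detail to be careful about, is how to enumerate precisely the long rows at the start of each iteration without scanning all $m$ rows every time. I would maintain the rows in buckets indexed by $\lceil\log n_i\rceil$, built initially by bucket sort in $O(m+\max_i\lceil\log n_i\rceil)=O(m+\sum_i\log n_i)$ time; whenever a row shrinks (after its representatives are selected and the corresponding prefix is removed), it is re-bucketed in $O(1)$ time. At iteration $j$ with threshold $b_j$ we visit exactly the buckets with index at least $\lceil\log b_j\rceil$; any short rows we touch can be charged to the decrement of $\lceil\log b_j\rceil$ between successive iterations, which telescopes to $O(m+\sum_i\log n_i)$ overall. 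Combined with the comparison bound, this yields the claimed running time.
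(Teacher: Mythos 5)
Your proposal is correct and follows essentially the same route as the paper's own argument: the same long/short row split at threshold $\lfloor k/(2m)\rfloor$, the same observation that short rows hold at most $k/2$ of the answer, one $\MATSELECT_2$-style step on the long rows with halved $k$, the $O(\log n_i)$ charging per row, and the same bucket-by-$\lceil\log n_i\rceil$ implementation. The only deviation is the harmless constant in $b'm'\ge k'/2$ (the correct bound is $k'/4$), which you already hedge against.
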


In Section~\ref{S-lower} below we show that the running time of $\MATSELECT_3(A,N,k)$ is optimal for some values of $N=(n_1,n_2,\ldots,n_m)$ and $k$, e.g., if $k=\frac{1}{2}\sum_{i=1}^m n_i$, i.e., for median selection. The $O(m\log\frac{k}{m})$ running time of $\MATSELECT_2(A,k)$ is sometimes better than the $O(m+\sum_{i=1}^m \log n_i)$ running time of $\MATSELECT_3(A,N,k)$. We next describe an algorithm,  $\MATSELECT_4(A,k)$, which is always at least as fast as the three algorithms already presented, and sometimes faster.

\subsection{An $O(m+\sum_{i=1}^m \log (k_i+1))$ algorithm}\label{S-Slogki}

As before, let $k_i$ be the (yet unknown) number of items in the $i$-th row that belong to the smallest~$k$ items of the matrix. In this section we describe an algorithm for finding these $k_i$'s that runs in $O(m+\sum_{i=1}^m \log (k_i+1))$ time.

We partition each row this time into blocks of size $1,2,4,\ldots$. The representative of a block is again the last item in the block. Note that the first $k_i$ items in row $i$ reside in $\lfloor \log(k_i+1) \rfloor$ complete blocks, plus one incomplete block, if $\log(k_i+1)$ is not an integer. Thus $L=\sum_{i=1}^m \lfloor \log(k_i+1) \rfloor$ is exactly the number of block representatives that belong to the~$k$ smallest items of the matrix.

Suppose that $\ell \ge L$ is an upper bound on the true value of~$L$. We can run $\MATSELECT_1$ to select the $\ell$ smallest block representatives in $O(m+\ell)$ time. If $\ell_i$ representatives were selected from row~$i$, we let $n_i = 2^{\ell_i+1}-1$. We now run $\MATSELECT_3$ which runs in $O(m+\sum_{i=1}^m \log n_i) = O(m+\sum_{i=1}^m (\ell_i+1)) = O(m+\ell)$. Thus, if $\ell=O(L)$, the total running time is $O(m+\sum_{i=1}^m \log (k_i+1))$, as promised.

How do we find a tight upper bound on $L=\sum_{i=1}^m \lfloor \log(k_i+1) \rfloor$? We simply try $\ell = m,2m,4m,\ldots$, until we  obtain a value of~$\ell$ that is high enough. If $\ell<L$, i.e., $\ell$ is not large enough, we can discover it in one of two ways. Either $\sum_{i=1}^m n_i<k$, in which case $\ell$ is clearly too small. Otherwise, the algorithm returns an array of $k_i$ values. We can check whether these values are the correct ones in $O(m)$ time. First compute $M=\max_{i=1}^m A_{i,k_i}$. Next check that $A_{i,k_i+1}>M$, for $1\le i\le m$. As $\ell$ is doubled in each iteration, the cost of the last iteration dominates the total running time which is thus $O(m+2L) = O(m+\sum_{i=1}^m \log (k_i+1))$. We call the resulting algorithm $\MATSELECT_4$.

\begin{theorem}\label{T-Slogki} Let $A$ be a row-sorted matrix containing~$m$ rows and let $k\ge 2m$. Algorithm $\MATSELECT_4(A,k)$ selects the~$k$ smallest items in~$A$ in $O\left(m+\sum_{i=1}^m \log (k_i+1)\right)$ time, where $k_i$ is the number of items selected from row~$i$.
\end{theorem}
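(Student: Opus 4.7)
The plan is to make rigorous the algorithm sketch given just before the theorem. The foundational observation is a counting identity: when row $i$ is partitioned into blocks of sizes $1,2,4,\ldots$, the $j$-th block ends at position $2^j-1$, so the first $k_i$ items of row $i$ cover exactly $\lfloor\log(k_i+1)\rfloor$ whole blocks; summing, there are precisely $L := \sum_{i=1}^m \lfloor\log(k_i+1)\rfloor$ block representatives sitting at positions $\le k_i$. Call these the \emph{good} representatives. I would next establish a monotonicity claim: letting $v_k$ denote the $k$-th smallest value of $A$, any representative at position $\le k_i$ in row $i$ has value $\le v_k$ while one at position $> k_i$ has value $> v_k$, so the good representatives are exactly the $L$ smallest representatives of the representative matrix. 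Consequently, whenever $\MATSELECT_1$ is called with $\ell \ge L$, the number $\ell_i$ of representatives taken from row $i$ is at least $\lfloor\log(k_i+1)\rfloor$, and the derived bound $n_i := 2^{\ell_i+1}-1$ satisfies $n_i \ge k_i$ for every $i$.

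With this in hand I would verify a single iteration of the doubling loop: $\MATSELECT_1$ on the representative matrix runs in $O(m+\ell)$ time by Theorem~\ref{T-mk}; $\MATSELECT_3$ on the matrix truncated to row lengths $n_i$ runs in $O(m+\sum_i \log n_i) = O(m + \sum_i(\ell_i+1)) = O(m+\ell)$ time by Theorem~\ref{T-Slogni}; and the final $O(m)$ check computes $M = \max_i A_{i,k_i'}$ and confirms both $\sum k_i' = k$ and $A_{i,k_i'+1} > M$ for all $i$. Once $\ell \ge L$ we have $n_i \ge k_i$, so all $k$ smallest items of $A$ survive the truncation, $\MATSELECT_3$ returns the true $k_i$'s, and the check passes. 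Conversely, if some returned $k_i'$ differs from the true $k_i$ under $\sum k_i' = k$, then at some row $k_i' < k_i$ while at another $k_j' > k_j$, forcing $A_{i,k_i'+1} \le v_k < M$ and failing the check; and if $\sum_i n_i < k$, $\MATSELECT_3$ cannot even return $k$ items, so an inadequate $\ell$ is always detected.

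Since the loop doubles $\ell$ starting from $m$, it stops at some $\ell \le 2\max(m, L)$, and the geometric sum of per-iteration costs is dominated by the last term $O(m+\ell) = O(m+L) = O\!\left(m + \sum_{i=1}^m \log(k_i+1)\right)$. The one delicate step is the monotonicity claim identifying the good representatives with the $L$ smallest representatives of the representative matrix globally; every other piece is a clean composition of the already-established Theorems~\ref{T-mk} and~\ref{T-Slogni}, so I expect that claim to be the main obstacle to present carefully.
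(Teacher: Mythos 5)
Your proof is correct and follows the paper's approach, making explicit the two correctness invariants (that $\ell\ge L$ forces $n_i\ge k_i$, and that the $O(m)$ check passes exactly when the returned counts equal the true $k_i$'s) which the paper leaves implicit. The monotonicity claim you single out as the main obstacle is in fact immediate from the sorted-row structure: a representative at position $\le k_i$ lies among the $k$ smallest items of $A$ and so is $\le v_k$, while a representative at position $>k_i$ is $\ge A_{i,k_i+1}>v_k$, so the $L$ good representatives are precisely the $L$ smallest representatives overall.
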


\subsection{Lower bounds for selection from row-sorted matrices}\label{S-lower}

We begin with a simple proof that the $O(m\log\frac{k}{m})$ algorithm is optimal for $k\ge 2m$.

\begin{theorem} Any algorithm for selecting the $k$ smallest items from a matrix with $m$ sorted rows must perform at least $(m-1)\log\frac{m+k}{m}$ comparisons on some inputs.
\end{theorem}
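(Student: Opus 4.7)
The plan is to give a standard information-theoretic (decision-tree) argument. Every comparison-based algorithm for the problem can be modelled as a binary decision tree whose leaves are labelled by outputs; the worst-case number of comparisons is at least $\log_2 (\text{number of leaves})$, and the number of leaves is at least the number of distinct correct outputs the algorithm must produce over the family of legal inputs. So the task reduces to (i) showing that, in the row-sorted setting, the correct output is uniquely encoded by a vector $(k_1,\ldots,k_m)$ with $k_i\ge 0$ and $\sum_i k_i=k$, and (ii) exhibiting, for each such vector, an input matrix that forces exactly this vector as the answer.

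For (i), by the observation in the introduction (reproduced in the footnote of Section~\ref{S-intro}), selecting the $k$-th smallest item also forces the algorithm to identify the set of $k$ smallest items. In a row-sorted matrix, the set of $k$ smallest items intersects row~$i$ in a prefix of that row, because any item in the set from row~$i$ dominates all items preceding it in the same row. Hence the set of $k$ smallest items is determined by, and determines, the vector $(k_1,\ldots,k_m)$ of prefix lengths, with $k_i\ge 0$ and $\sum_i k_i=k$.

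For (ii), I would realise each such vector explicitly. Assume the rows are long enough (say of length $\ge k$, or formally infinite, padding with $+\infty$ if needed); this only strengthens the lower bound. Given a target vector $(k_1,\ldots,k_m)$, build a matrix whose first $k_i$ entries of row~$i$ are distinct values drawn from the interval $(0,1)$ in increasing order along the row, and whose remaining entries are all strictly greater than any used small value (e.g.\ $+\infty$). Any such matrix is row-sorted and its set of $k$ smallest items is precisely the $k_i$-prefix of row~$i$, for each~$i$, so the required answer is exactly the given vector. Thus every vector is the unique correct output on some legal input, and the decision tree must have at least
\[
\binom{m+k-1}{m-1}
\]
leaves (by stars-and-bars).

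The final step is to turn this count into the claimed bound. Using $\binom{n}{r}\ge (n/r)^r$ with $n=m+k-1$ and $r=m-1$, together with $\frac{m+k-1}{m-1}\ge\frac{m+k}{m}$ (which is equivalent to $k\ge 0$), gives
\[
\binom{m+k-1}{m-1}\;\ge\;\left(\frac{m+k}{m}\right)^{m-1},
\]
and taking $\log_2$ yields the required $(m-1)\log\frac{m+k}{m}$ lower bound on the worst-case depth of the decision tree. The only subtle point is checking the simple inequality $\frac{m+k-1}{m-1}\ge\frac{m+k}{m}$; everything else is routine adversary-plus-decision-tree reasoning, so I expect no real obstacle.
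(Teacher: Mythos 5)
Your proposal is correct and follows essentially the same route as the paper: an information-theoretic/decision-tree argument counting the $\binom{m+k-1}{m-1}$ possible answer vectors $(k_1,\ldots,k_m)$ via stars-and-bars, then applying $\binom{n}{r}\ge(n/r)^r$ and the elementary inequality $\frac{m+k-1}{m-1}\ge\frac{m+k}{m}$. The only difference is that you spell out the realizability of each vector by an explicit input, which the paper leaves implicit.
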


\begin{proof} We use the information-theoretic lower bound. We need to lower bound $s_k(m)$, which is the number of $m$-tuples $(k_1,k_2,\ldots,k_m)$, where $0\le k_i$, for $1\le i\le m$, and $\sum_{i=1}^m k_i=k$. It is easily seen that $s_k(m) = {m+k-1 \choose m-1}$, as this is the number of ways to arrange $k$ identical balls and $m-1$ identical dividers in a row. We thus get a lower bound of
\[ \log {m+k-1 \choose m-1} \;\ge\; \log \left(\frac{m+k-1}{m-1}\right)^{m-1} \;=\; (m-1)\log\frac{m+k-1}{m-1} \;\ge\; (m-1)\log\frac{m+k}{m} \;,\]
where we used the well-known relation ${n \choose k}> \left(\frac{n}{k}\right)^k$.
\end{proof}

We next show that our new $O(m+\sum_{i=1}^m\log n_i)$ algorithm is optimal, at least in some cases, e.g., when $k=\frac{1}{2}\sum_{i=1}^m n_i$ which corresponds to median selection.

\BLUE{
\begin{theorem}\label{T-lower-k} Any algorithm for selecting the $k=\frac{1}{2}\sum_{i=1}^m n_i$ smallest items from a row-sorted matrix with $m$ rows of lengths $n_1,n_2,\ldots,n_m\ge 1$ must perform at least $\sum_{i=1}^m \log (n_i+1) - \log\left(1+\sum_{i=1}^m n_i\right)$ comparisons on some inputs.
\end{theorem}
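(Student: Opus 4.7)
The approach is to apply the information-theoretic lower bound of $\log s_k$ from the introduction, to identify $s_k$ combinatorially, and then to lower-bound it via a generating-function argument. First I observe that, since each row is sorted, any candidate set of $k$ smallest items is uniquely encoded by a tuple $(k_1,\ldots,k_m)$ with $0 \le k_i \le n_i$ and $\sum_i k_i = k$, where $k_i$ is the length of the prefix taken from row~$i$; every such tuple is realizable by some valid instance (e.g., assign small enough values to the chosen prefixes and large enough values elsewhere). Hence $s_k$ equals the coefficient $c_k$ of $x^k$ in the generating function
\[ F(x) \;=\; \prod_{i=1}^m \bigl(1 + x + x^2 + \cdots + x^{n_i}\bigr), \]
and with $N = \sum_i n_i$ the theorem reduces to lower-bounding $c_{N/2}$.

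The second step is a simple averaging. Since $\sum_{s=0}^{N} c_s = F(1) = \prod_i (n_i+1)$ is distributed over the $N+1$ values $s \in \{0,\ldots,N\}$, the average coefficient equals $\prod_i(n_i+1) / (N+1)$. Thus it suffices to show that $c_{N/2}$ is the \emph{maximum} of the sequence $(c_s)$: taking logarithms of $c_{N/2} \ge \prod_i(n_i+1)/(N+1)$ then yields the claimed bound $\sum_i \log(n_i+1) - \log(1+N)$.

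The main step, and the only non-trivial one, is showing that $c_{N/2}$ attains the maximum. This follows from two properties of $(c_s)$: symmetry and unimodality. Symmetry $c_s = c_{N-s}$ is witnessed by the complementation bijection $(k_1,\ldots,k_m) \mapsto (n_1-k_1,\ldots,n_m-k_m)$, or equivalently by the palindromic identity $F(x) = x^N F(1/x)$. Unimodality follows because each factor of $F$ has a constant-positive (hence log-concave) coefficient sequence, and the product of polynomials with nonnegative log-concave coefficients (and no internal zeros) again has a log-concave coefficient sequence -- a classical result -- with log-concavity implying unimodality. A symmetric unimodal sequence on $\{0,\ldots,N\}$ attains its maximum at the center $N/2$, completing the proof. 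The main obstacle is this unimodality step; if one prefers to avoid invoking the classical log-concavity product theorem, one can instead argue by induction on $m$ that convolving a symmetric unimodal sequence with the uniform sequence on $\{0,\ldots,n_i\}$ preserves both symmetry and unimodality, which is a direct calculation.
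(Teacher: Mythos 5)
Your proof is correct and follows essentially the same route as the paper: count the candidate solutions as tuples $(k_1,\ldots,k_m)$ with $0\le k_i\le n_i$ summing to $k$, average the total $\prod_i(n_i+1)$ over the $N+1$ possible values of $k$, and show the central coefficient is the maximum via symmetry and unimodality of the count. The paper phrases the unimodality step probabilistically (the distribution of a sum of independent uniforms, shown symmetric and unimodal by induction on convolutions), which is exactly your suggested ``direct calculation'' alternative; your primary route through the classical log-concavity-of-products theorem is a slightly slicker way to discharge the same lemma.
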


\begin{proof} The number of possible solutions to the selection problem for all values of $0\le k\le \sum_{i=1}^m n_i$ is $\prod_{i=1}^m (n_i+1)$. (Each solution corresponds to a choice $0\le k_i\le n_i$, for $i=1,2,\ldots,m$.) We prove below that the number of solutions is maximized for $k=\left\lfloor\frac{1}{2}\sum_{i=1}^m n_i\right\rfloor$ (and $k=\left\lceil\frac{1}{2}\sum_{i=1}^m n_i\right\rceil$). The number of possible solutions for this value of $k$ is thus at least
$(\prod_{i=1}^m (n_i+1))/(1+\sum_{i=1}^m n_i)$.
Taking logarithm, we get the promised lower bound.

We next prove that the number of solutions is maximized when $k=\left\lfloor\frac{1}{2}\sum_{i=1}^m n_i\right\rfloor$. Let $X_i$ be a uniform random variable on $\{0,1,\ldots,n_i\}$, and let $Y=\sum_{i=1}^m X_i$. The number of solutions for a given value~$k$ is proportional to the probability that $Y$ attains the value~$k$. Let $Y_j=\sum_{i=1}^j X_i$. We prove by induction on~$j$ that the distribution of~$Y_j$ is maximized at $\mu_j=\frac{1}{2}\sum_{i=1}^j n_i$, is symmetric around~$\mu_j$, and is increasing up to~$\mu_j$ and decreasing after~$\mu_j$. The base case is obvious as $Y_1=X_1$ is a uniform distribution. The induction step follows from an easy calculation. Indeed, $Y_j=Y_{j-1}+X_j$, where $X_j$ is uniform and $Y_{j-1}$ has the required properties. The distribution of $Y_j$ is the \emph{convolution} of the distributions of $Y_{j-1}$ and $X_j$, which corresponds to taking the \emph{average} of $n_j+1$ values of the distribution of $Y_{j-1}$. It follows easily that $Y_j$ also has the required properties.
\end{proof}

We next compare the lower bound obtained, $\sum_{i=1}^m \log (n_i+1) - \log\left(1+\sum_{i=1}^m n_i\right)$, with the upper bound $O(m+\sum_{i=1}^m \log n_i)$. The subtracted term in the lower bound is dominated by the first term, i.e., $\log\left(1+\sum_{i=1}^m n_i\right) \le \frac{\log(m+1)}{m} \sum_{i=1}^m \log (n_i+1)$, where equality holds only if $n_i=1$, for every~$i$. When the $n_i$'s are large, the subtracted term becomes negligible. Also, as $n_i\ge 1$, we have $\sum_{i=1}^m \log (n_i+1) \ge m$. Thus, the lower and upper bound are always within a constant multiplicative factor of each other.
}

The optimality of the $O(m+\sum_{i=1}^m\log n_i)$ algorithm also implies the optimality of our new ``output-sensitive" $O(m+\sum_{i=1}^m\log (k_i+1))$ algorithm. As $k_i\le n_i$, an algorithm that performs less than $c(m+ \sum_{i=1}^m \log(k_i+1))$ comparisons on all inputs, for some small enough~$c$, would contradict the lower bounds for the $O(m+\sum_{i=1}^m\log n_i)$ algorithm.

\section{Selection from $X+Y$}\label{S-XY}

We are given two \emph{unsorted} sets $X$ and $Y$ and we would like to find the $k$-th smallest item, and the set of $k$ smallest items, in the set $X+Y$. We assume that $|X|=m$, $|Y|=n$, where $m\ge n$.

\subsection{An $O(m+n+k)$ algorithm}\label{S-XY-mnk}

Heapify $X$ and heapify $Y$, which takes $O(m+n)$ time. Let $x_1,\ldots,x_m$ be the heapified order of~$X$, i.e., $x_i\le x_{2i},x_{2i+1}$, whenever the respective items exists. Similarly, let $y_1,\ldots,y_n$ be the heapified order of~$Y$. Construct a heap of maximum degree 4 representing $X+Y$ as follows.
The root is $x_1+y_1$. Item $x_i+y_1$, for $i\ge 1$ has four children $x_{2i}+y_{1},x_{2i+1}+y_{1},x_{i}+y_{2},x_{i}+y_{3}$. Item $x_i+y_j$, for $i\ge 1$, $j>1$, has two children
$x_{i}+y_{2j},x_{i}+y_{2j+1}$, again when the respective items exist. (Basically, this is a heapified version of $X+y_1$, where each $x_i+y_1$ is the root of a heapified version of $x_i+Y$.) We can now apply algorithm $\SOFTSELECT$ on this heap. We call the resulting algorithm $\XYSELECT_1(X,Y)$.

\begin{theorem}\label{T-XY-mnk} Let $X$ and $Y$ be unordered sets of $m$ and $n$ items respectively. Then, algorithm $\XYSELECT_1(X,Y)$ finds the $k$-th smallest item, and the set of $k$ smallest items in $X+Y$, in $O(m+n+k)$ time.
\end{theorem}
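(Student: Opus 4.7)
The plan is to verify that the data structure constructed on top of the heapified $X$ and $Y$ is a legal heap-ordered tree of constant maximum degree whose nodes are exactly the elements of $X+Y$, and then invoke Theorem~\ref{T-dary} for $d=4$. The $O(m+n)$ term comes purely from the two initial classical heapifications of $X$ and $Y$ by Williams' algorithm; all subsequent work is inside the implicit heap on $X+Y$.

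First I would check that every sum $x_i+y_j$ appears exactly once in the described tree. Starting at the root $x_1+y_1$ and only following the ``X-edges'' $x_i+y_1 \mapsto x_{2i}+y_1, x_{2i+1}+y_1$ one recovers the binary heap tree on $X$ attached to $y_1$, so each index $i\in\{1,\dots,m\}$ arises exactly once as a node of the form $x_i+y_1$. Fixing any such $i$ and following only the ``Y-edges'' $x_i+y_j \mapsto x_i+y_{2j}, x_i+y_{2j+1}$ (together with the two special Y-edges $x_i+y_1\mapsto x_i+y_2, x_i+y_3$) recovers the binary heap tree on $Y$ rooted at $x_i+y_1$, so each index $j\in\{1,\dots,n\}$ arises exactly once as a descendant of $x_i+y_1$. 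Thus the map $(i,j)\mapsto$ node-of-the-tree is a bijection onto $X+Y$.

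Next I would verify heap order on each edge, which reduces to the heap orders of $X$ and $Y$. The X-edges satisfy $x_{2i}+y_1\ge x_i+y_1$ and $x_{2i+1}+y_1\ge x_i+y_1$ because $X$ is heap-ordered; the Y-edges satisfy $x_i+y_{2j}\ge x_i+y_j$ and $x_i+y_{2j+1}\ge x_i+y_j$ (and similarly for the two extra Y-children of $x_i+y_1$) because $Y$ is heap-ordered. By construction the maximum out-degree of a node is $4$ (attained only at nodes of the form $x_i+y_1$), and non-existent children can be treated as $+\infty$ dummies exactly as in Section~\ref{S-heap}.

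With the structural check in hand, the running-time bound is immediate: Theorem~\ref{T-dary} (or equivalently Theorem~\ref{T-general-tree}, since the tree has $D(T,k)=O(k)$ when the maximum degree is the constant $4$) shows that selecting the $k$ smallest items from this heap takes $O(k)$ time, including all on-the-fly navigation and any heapifications of the at-most-$4$ children of expanded nodes. Adding the $O(m+n)$ preprocessing yields the claimed $O(m+n+k)$ bound, and the same set of~$k$ items extracted from $X+Y$ contains the $k$-th smallest sum by Lemma~\ref{L-soft-1}. The only delicate step is the bookkeeping that each pair $(i,j)$ is reached by a unique root-to-node path; once that is verified, everything else is a direct appeal to the heap-selection machinery of Section~\ref{S-heap}.
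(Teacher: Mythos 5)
Your proposal is correct and follows the same route as the paper: the paper's argument for this theorem is precisely the construction of the implicit degree-$4$ heap on $X+Y$ from the two heapified arrays followed by an appeal to the heap-selection machinery, and you have simply made explicit the (routine but worthwhile) checks that the construction is a bijection onto $X+Y$ and that heap order on every edge reduces to the heap orders of $X$ and $Y$. The accounting --- $O(m+n)$ for the two heapifications plus $O(k)$ for selection in a heap of constant maximum degree --- matches the paper's.
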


\subsection{An $O(m\log\frac{k}{m})$ algorithm, for $k\ge 6m$, $m\ge n$}\label{S-XY-mlogmk}

If $Y$ is sorted, then $X+Y$ is a row-sorted matrix, and we can use algorithm $\MATSELECT_2$ of Theorem~\ref{T-mlogkm}. We can sort~$Y$ in $O(n\log n)$ time and get an $O(m\log\frac{k}{m}+n\log n)$ algorithm. The running time of this algorithm is $O(m\log\frac{k}{m})$ when $k\ge mn^\varepsilon$, for any fixed $\varepsilon>0$. But, for certain values of~$m,n$ and $k$, e.g., $m=n$ and $k=mn^{o(1)}$, the cost of sorting is dominant. We show below that the sorting can always be avoided.

We first regress and describe an alternative $O(m\log\frac{k}{m})$ algorithm for selection from row-sorted matrices. The algorithm is somewhat more complicated than algorithm $\MATSELECT_2$ given in Section~\ref{S-mlogkm}. The advantage of the new algorithm is that much less assumptions are made about the order of the items in each row. A similar approach was used by Frederickson and Johnson \cite{FrJo82} but we believe that our approach is simpler. In particular we rely on a simple partitioning lemma (Lemma~\ref{L-weight} below) which is not used, explicitly or implicitly, in~\cite{FrJo82}.  

Instead of partitioning each row into blocks of equal size, as done by algorithm $\MATSELECT_2$ of Section~\ref{S-mlogkm}, we partition each row into exponentially increasing blocks, similar, but not identical, to the partition made by algorithm $\MATSELECT_3$ of Section~\ref{S-Slogni}.

Let $b=\left\lfloor\frac{k}{3m}\right\rfloor$. Partition each row into blocks of size $b,b,2b,4b,\ldots,2^jb,\ldots$. The representative of a block is again the last item in the block. We use algorithm $\MATSELECT_1$ to select the $m$ smallest representatives. This takes $O(m)$ time. Let $e_1<e_2<\ldots<e_{m}$ denote the $m$ selected representatives in (the unknown) sorted order, and let $s_1,s_2,\ldots,s_{m}$ be the sizes of their blocks. We next use an $O(m)$ \emph{weighted selection} algorithm (see, e.g., Cormen et al.\ \cite{CoLeRiSt09}, Problem 9.2, p.~225) to find the smallest $\ell$ such that $\frac{k}{6} \le \sum_{j=1}^\ell s_j$ and the items $e_1,e_2,\ldots,e_\ell$, in some order. Such an $\ell\le m$ must exist, as $m\left\lfloor\frac{k}{3m}\right\rfloor\ge m(\frac{k}{3m}-1) = \frac{1}{3}(k-3m) \ge \frac{k}{6}$, as $k\ge 6m$. Also note that $\frac{k}{6} \le \sum_{j=1}^\ell s_j < \frac{k}{3}$, as the addition of each block at most doubles the total size, i.e., $\sum_{j=1}^\ell s_j < 2\sum_{j=1}^{\ell-1} s_j$, for $\ell>1$. 

\begin{claim} All items of the blocks whose representatives are $e_1,e_2,\ldots,e_\ell$ are among the $k$ smallest items in the matrix.
\end{claim}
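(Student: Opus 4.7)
The plan is to bound the number of matrix items that are $\le e_\ell$ and show this count is strictly less than $k$. Because every item in a selected block is dominated by its block's representative, which is at most $e_\ell$, such a bound immediately places every item of a selected block among the $k$ smallest of the matrix.

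First I would argue that, for each row $i$, the blocks whose representatives lie in $\{e_1,\ldots,e_\ell\}$ form a prefix of row $i$'s block sequence. The reason is that any representative $r$ of row $i$ satisfying $r \le e_\ell$ is in particular $\le e_m$, so $r$ belongs to the $m$ smallest representatives $e_1, \ldots, e_m$; being $\le e_\ell$ then forces $r$ to sit among $e_1, \ldots, e_\ell$. Since representatives are monotone along a sorted row, the selected ones form a prefix of some length $t_i \ge 0$, and the representative of block $t_i+1$ (if it exists) is strictly greater than $e_\ell$.

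Next I would count $p_i$, the number of items of row $i$ that are $\le e_\ell$. Using the block size sequence $b, b, 2b, 4b, \ldots$, the total size of the first $t_i$ blocks equals $S_i := 2^{t_i-1}b$ for $t_i \ge 1$, which coincides \emph{exactly} with the size of block $t_i+1$. So $p_i \le S_i + (S_i - 1) = 2 S_i - 1$ for $t_i \ge 1$ (all $S_i$ items of the selected prefix, plus at most $S_i - 1$ items of the next block whose representative exceeds $e_\ell$); for $t_i = 0$ I get $p_i \le b - 1$ similarly. Summing and using the two bounds already in hand, $\sum_i S_i = \sum_{j=1}^\ell s_j < k/3$ and $mb \le m \cdot k/(3m) = k/3$, I would conclude
\[ \sum_{i=1}^m p_i \;<\; 2\cdot \tfrac{k}{3} + \tfrac{k}{3} \;=\; k, \]
so at most $k-1$ items of the matrix are $\le e_\ell$, which yields the claim.

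The main delicacy to watch for is the doubling structure: it is precisely because block sizes at least double that the unselected ``overshoot'' block $t_i+1$ contributes no more items than the entire selected prefix, giving only a factor-of-$2$ blow-up in each row. Combined tightly with $\sum s_j < k/3$ and $mb \le k/3$, this is just enough to fit $\sum_i p_i$ strictly below $k$; any slower geometric growth of block sizes, or a looser bound on the selected mass, would break the counting.
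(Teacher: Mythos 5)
Your proof is correct, and it takes a genuinely different route from the paper's. The paper reasons about where the set $S_k$ of the $k$ smallest items can live: fewer than $mb\le k/3$ of its items lie in rows containing no full block of $S_k$-items, and in every other row the doubling of block sizes forces at least half of that row's share of $S_k$ into full blocks, so at least $k/3$ of $S_k$ sits in blocks wholly contained in $S_k$; since $\sum_{j\le\ell}s_j<k/3$ and $e_1,\ldots,e_\ell$ are the smallest representatives, their blocks must all be such full blocks. You instead bound the rank of $e_\ell$ directly: the selected blocks form a prefix of each row, the doubling guarantees that the first unselected block of a row has exactly the size of the selected prefix, and summing gives fewer than $2\sum_{j\le\ell}s_j+mb<k$ items that are $\le e_\ell$. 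Both arguments rest on the same three facts ($\sum_{j\le\ell}s_j<k/3$, $mb\le k/3$, and the doubling of block sizes), but the paper's is a pigeonhole on the mass of $S_k$ in full blocks while yours is a direct rank computation for $e_\ell$; yours is arguably the more self-contained of the two and yields the slightly stronger statement that \emph{every} matrix item $\le e_\ell$, not only those in selected blocks, belongs to $S_k$. Your counting is tight but valid: the prefix sizes sum to $\sum_{j\le\ell}s_j<k/3$ strictly, the overshoot blocks contribute at most as much again, and the $t_i=0$ rows contribute at most $m(b-1)<k/3$, so the total is strictly below $k$ as you claim.
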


\begin{proof} Let $S_k$ be the set of $k$ smallest items of the matrix. Consider again the partition of $S_k$ among the $m$ rows of the matrix.  Less than $mb=m\left\lfloor\frac{k}{3m}\right\rfloor\le \frac{k}{3}$ of the items of~$S_k$ belong to rows that do not contain a full block of $S_k$ items. Thus, at least $\frac{2k}{3}$ of the items of~$S_k$ are contained in rows that contain at least one full block of $S_k$ items. The exponential increase in the size of the blocks ensures that in each such row, at least half of the items of~$S_k$ are contained in full blocks. Thus, at least $\frac{k}{3}$ of the items of~$S_k$ are contained in full blocks. In particular, if $e_1,e_2,\ldots,e_\ell$ are the smallest block representatives, and $\sum_{j=1}^\ell s_j \le \frac{k}{3}$, then all the items in the blocks of $e_1,e_2,\ldots,e_\ell$ belong to $S_k$.
\end{proof}

We can thus remove all the items in the blocks of $e_1,e_2,\ldots,e_\ell$ from the matrix and proceed to find the $k-\sum_{j=1}^\ell s_j\le \frac{5k}{6}$ smallest items of the remaining matrix. When $k$ drops below $6m$, we use the algorithm of $\MATSELECT_1$ of Section~\ref{S-mk}. The resulting algorithm performs $O(\log\frac{k}{m})$ iterations, each taking $O(m)$ time, so the total running time is $O(m\log\frac{k}{m})$. (This matches the running time of $\MATSELECT_2$, using a somewhat more complicated algorithm.)

We make another small adaptation to the new $O(m\log\frac{k}{m})$ algorithm before returning to the selection from $X+Y$ problem. Instead of letting $b=\left\lfloor\frac{k}{3m}\right\rfloor$ and using blocks of size $b,b,2b,4b,\ldots$, we let $b'=2^{\lfloor \log_2 b\rfloor}$, i.e., $b'$ is the largest power of~$2$ which is at most $b$, and use blocks of size $b',b',2b',4b',\ldots$. All block sizes are now powers of~$2$. As the sizes of the blocks may be halved, we select the $2m$ smallest block representatives. The number of items removed from each row in each iteration is now also a power of~$2$.

Back to the $X+Y$ problem. The main advantage of the new algorithm is that we do not really need the items in each row to be sorted. All we need are the items of ranks $b,b,2b,4b,\ldots$, where $b=2^\ell$ for some $\ell>0$, in each row. In the $X+Y$ problem the rows, or what remains of them after a certain number of iterations, are related, so we can easily achieve this task.

At the beginning of the first iteration, we use repeated median selection to find the items of~$Y$ whose ranks are $1,2,4,\ldots$. This also partitions $Y$ into blocks of size $1,2,4,\ldots$ such that items of each block are smaller than the items of the succeeding block. We also place the items of ranks $1,2,4,\ldots$ in their corresponding places in~$Y$. This gives us enough information to run the first iteration of the matrix selection algorithm.

In each iteration, we refine the partition of~$Y$. We apply repeated median selection on each block of size $2^\ell$ in~$Y$, breaking it into blocks of size $1,1,2,4,\ldots,2^{\ell-1}$. The total time needed is $O(n)$ per iteration, which we can easily afford. We assume for simplicity that $n=|Y|$ is a power of~$2$ and that all items in~$Y$ are distinct. We now have the following fun lemma:

\begin{figure}[t]
\begin{center}
\includegraphics[scale=0.5]{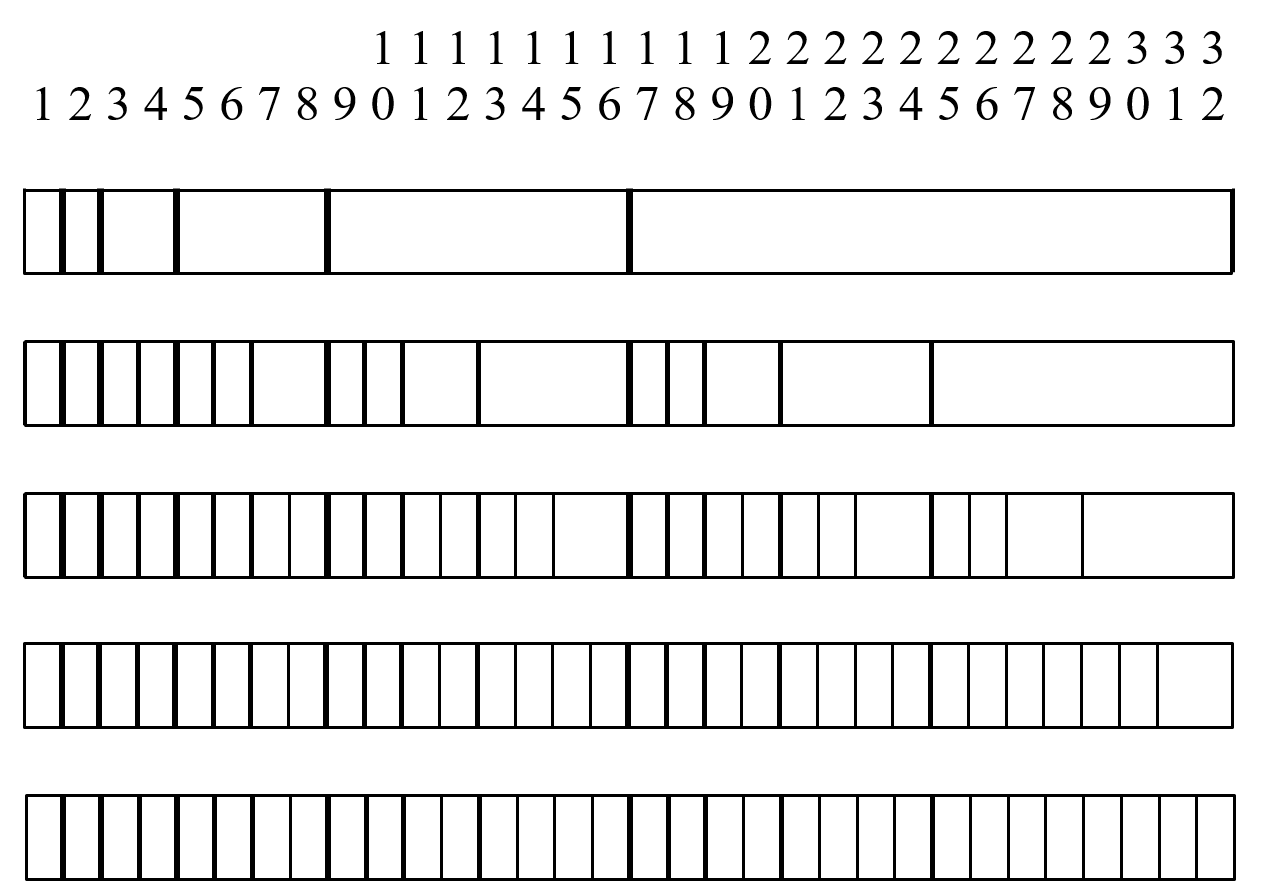}
\end{center}
\vspace*{-10pt}
\caption{Partitions of $Y$ for $n=32$.}\label{F-partition}
\end{figure}
\begin{lemma}\label{L-weight} After $i$ iterations of the above process, if $1\le r\le n$ has at most $i$ 1's in its binary representation, then $Y[r]$ is the item of rank~$r$ in~$Y$, i.e., $Y[1:r-1]<Y[r]<Y[r+1:n]$. Additionally, if $r_1<r_2$ both have at most $i$ 1's in their binary representation, and $r_2$ is the smallest number larger than~$r_1$ with this property, then $Y[1:r_1]<Y[r_1+1:r_2]<Y[r_2+1:n]$, i.e., the items in $Y[r_1+1:r_2]$ are all larger than the items in $Y[1:r_1]$ and smaller than the items in $Y[r_2+1:n]$.
\end{lemma}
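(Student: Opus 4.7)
I would prove the lemma by induction on $i$, taking as inductive hypothesis exactly the statement of the lemma: after $i$ iterations, every position $r$ whose binary expansion has at most $i$ ones holds the rank-$r$ item of $Y$, and for every two consecutive such positions $r_1<r_2$ the sub-array $Y[r_1+1:r_2]$ contains exactly the items of ranks $r_1+1,\ldots,r_2$ in some order.

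The engine of the induction is a small arithmetic fact about gaps between consecutive integers of bounded binary weight. If $r_1$ has exactly $i$ ones and its lowest $1$-bit sits at position $p$, then every value $r_1+s$ with $1\le s<2^p$ sets a previously-zero bit below position $p$ and so has at least $i+1$ ones, whereas $r_1+2^p$ flips the maximal run of $1$s starting at $p$ back to $0$s and carries a single $1$ upward, hence has at most $i$ ones; if $r_1$ has fewer than $i$ ones, then $r_1+1$ already has at most $i$ ones. Thus the gap to the next position of weight $\le i$ is always a power of two, and the blocks present after iteration $i$ are precisely of the sizes $2^\ell$ that the algorithm is designed to subdivide.

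The base case $i=1$ is the initial phase: repeated median selection on $Y$ locates the items of ranks $1,2,4,\ldots,n$, which are exactly the positions of weight $1$, and each resulting block trivially holds the items of its designated rank range. For the inductive step, fix a block $Y[r_1+1:r_2]$ of size $2^\ell=r_2-r_1$ produced by iteration $i$. By hypothesis it contains the items of ranks $r_1+1,\ldots,r_2$. Iteration $i+1$ applies median selection inside the block to pinpoint its items of in-block ranks $1,2,4,\ldots,2^{\ell-1}$; these land at absolute positions $r_1+2^j$ for $0\le j<\ell$, subdividing the block into new blocks of sizes $1,1,2,4,\ldots,2^{\ell-1}$ whose contents are again the correct rank ranges.

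The step I expect to require the most care is proving that the set of positions newly identified in iteration $i+1$ coincides with $\{r:w(r)=i+1\}$, where $w(\cdot)$ denotes the binary Hamming weight. In one direction, for a block with $w(r_1)=i$ and lowest $1$-bit at position $\ell$, each offset $r_1+2^j$ with $j<\ell$ flips a $0$-bit of $r_1$ to a $1$ and so has weight exactly $i+1$; blocks with $w(r_1)<i$ have size $1$ and contribute nothing. Conversely, for any $r$ with $w(r)=i+1$, letting $p$ be the position of its lowest $1$-bit, the value $r_1 := r-2^p$ has weight $i$, and the arithmetic fact above, applied to $r_1$, shows both that $r_1$ is the largest integer below $r$ of weight $\le i$ and that $p$ is strictly less than the lowest $1$-bit of $r_1$, so $r$ is one of the offsets extracted from the block of $r_1$. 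Together with the correctness of median selection this closes the induction and yields the lemma.
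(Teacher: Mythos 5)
Your proof is correct and follows essentially the same route as the paper's: induction on $i$, with the observation that consecutive positions of binary weight at most $i$ differ by a power of two (the gap being $2^p$ where $p$ is the lowest set bit of the left endpoint when its weight is exactly $i$, and $1$ otherwise), so that each iteration's subdivision points are exactly the positions of weight $i+1$. You merely spell out in both directions the correspondence between new subdivision points and weight-$(i{+}1)$ positions, which the paper asserts in a single sentence.
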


For example, if $n=32$, then after the first iteration we have the partition \[Y[1],Y[2],Y[3:4],Y[5:8],Y[9:16],Y[17:32]\;.\] After the second iteration, we have the partition
\[\begin{array}{c} Y[1],Y[2],\;\;Y[3],Y[4],\;Y[5],Y[6],Y[7:8],\;\;Y[9],Y[10],\allowbreak Y[11{:}12],Y[13:16], \\Y[17],Y[18],Y[19:20],Y[21:24],Y[25:32]\;.\end{array}\] (Actually, blocks of size 2 are also sorted.) The partitions obtained for $n=32$ in the first five iterations are also shown in Figure~\ref{F-partition}.

\smallskip
\begin{proof} The claim clearly holds after the first iteration, as numbers with a single 1 in their binary representation are exactly powers of 2. Let $Y[r_1+1:r_2]$ be a block of~$Y$ generated after~$i$ iterations. If $r_1$ has less than $i$ 1's, then $r_2=r_1+1$, so the block is trivial. Suppose, therefore, that~$r_1$ has exactly~$i$ 1's in its representation and that $r_2=r_1+2^\ell$. ($\ell$ is actually the index of the rightmost~1 in the representation of~$r_1$, counting from~$0$.) In the $(i+1)$-st iteration, this block is broken into the blocks $Y[r_1+1],Y[r_1+2],Y[r_1+3:r_1+4],\ldots,Y[r_1+2^{\ell-1}+1:r_1+2^\ell]$. As the numbers $r_1+2^j$, for $1\le j<\ell$ are exactly the number between $r_1$ and $r_2$ with at most $i+1$ 1's in their binary representation, this establishes the induction step.
\end{proof}


After $i$ iterations of the modified matrix selection algorithm applied to an $X+Y$ instance, we have removed a certain number of items $d_i$ from each row. The number of items removed from each row in each iteration is a power of~$2$. By induction, $d_i$ has at most $i$ 1's in its representation. In the $(i+1)$-st iteration we set $b=2^\ell$, for some $\ell\ge 1$ and need the items of rank $b,2b,4b,\ldots$ from what remains of each row. The items needed from the $i$-th row are exactly $X[i]+Y[d_i+2^jb]$, for $j=0,1,\ldots$. The required items from~$Y$ are available, as $d_i+2^jb$ has at most $i+1$ 1s in its binary representation! We call the resulting algorithm $\XYSELECT_2(X,Y)$.

\begin{theorem}\label{T-XY-mlogkm} Let $X$ and $Y$ be unordered sets of $m$ and $n$ items respectively, where $m\ge n$, and let $k\ge 6m$. Algorithm $\XYSELECT_2(X,Y)$ finds the $k$-th smallest item, and the set of $k$ smallest items in $X+Y$, in $O(m\log\frac{k}{m})$ time.
\end{theorem}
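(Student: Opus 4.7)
The plan is to verify the two ingredients that were built up in the preceding discussion: (i) the modified block-doubling matrix-selection algorithm correctly identifies, in each iteration, a set of items of $X+Y$ that lie among the $k$ smallest, and (ii) the items of $Y$ that need to be referenced in iteration $i+1$ are exactly those already correctly positioned by Lemma~\ref{L-weight}. Since the analysis for (i) is essentially the correctness argument already established for the modified $\MATSELECT_2$ (with block sizes rounded down to powers of $2$ and $2m$ representatives selected per iteration), I would spend most of the proof pinning down (ii) and then collecting the running-time accounting.

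For the correctness of the $Y$-accesses, I would argue by induction on the iteration number. At the start of the first iteration we apply repeated median selection to $Y$ to place items of ranks $1,2,4,\ldots$, matching the $i=1$ case of Lemma~\ref{L-weight}. For the inductive step, let $d_i$ denote the number of items already removed from any given row of the implicit $X+Y$ matrix after $i$ iterations; the total removed per row in each iteration is a power of $2$, and this power is at most the current block size $b=2^\ell$ used in that iteration, so $d_i$ is a sum of $i$ (not necessarily distinct) powers of $2$ chosen from a strictly decreasing sequence. In particular, $d_i$ has at most $i$ ones in its binary representation. In iteration $i+1$, the items required from row~$i$ of the residual matrix are $X[i]+Y[d_i+2^j b]$ for $j=0,1,\ldots$; each of these indices has at most $i+1$ ones in binary, so by Lemma~\ref{L-weight} applied after the $(i+1)$-st refinement of $Y$, they are correctly positioned, and the relative order between the $Y$-slots inside a row agrees with their true order. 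This is exactly the hypothesis required by the modified $\MATSELECT_2$ to correctly select the $2m$ smallest block representatives and to apply the weighted selection step.

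For the running-time accounting, the modified matrix-selection scheme performs $O(\log\frac{k}{m})$ iterations (each iteration removes at least a constant fraction of the remaining items from consideration, as argued for $\MATSELECT_2$), and each iteration consists of: one call to $\MATSELECT_1$ on a matrix of representatives, costing $O(m)$; one $O(m)$ weighted-selection step to identify the prefix $e_1,\ldots,e_\ell$; and one refinement pass over $Y$ that performs repeated median selection inside each current block, costing $O(n)$. Since $m\ge n$, each iteration costs $O(m)$, and the initial $O(n)$ placement of ranks $1,2,4,\ldots$ in $Y$ is also absorbed. Summing over the $O(\log\frac{k}{m})$ iterations and adding a final $O(m)$ call to $\MATSELECT_1$ once the remaining $k$ drops below $6m$ gives the claimed $O(m\log\frac{k}{m})$ bound on both comparisons and running time.

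The main obstacle, and the real content of the theorem beyond plugging in the earlier lemmas, is the bookkeeping tying $d_i$ to the number of ones in its binary representation, and thus showing that $d_i+2^jb$ always falls into a $Y$-block that has been sufficiently refined by the $(i+1)$-st pass. Once this is in place, everything else is a direct reuse of the analysis of the modified $\MATSELECT_2$ and of Lemma~\ref{L-weight}.
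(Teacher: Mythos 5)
Your proposal reproduces the paper's implicit argument for this theorem: correctness of the block-doubling variant of matrix selection (which drops $k$ by a constant factor per $O(m)$-work iteration), plus Lemma~\ref{L-weight} to justify that the needed $Y$-entries $Y[d_i+2^jb]$ are correctly positioned because $d_i+2^jb$ has at most $i+1$ ones in binary, plus the observation that the per-iteration $O(n)$ cost of refining $Y$ is absorbed by $O(m)$ since $m\ge n$. This matches the paper's route.

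One small inaccuracy in your bookkeeping: you justify ``$d_i$ has at most $i$ ones'' by saying that the amount removed from a row each iteration is a power of $2$ that is \emph{at most} the current block size $b'$, and that the per-iteration powers form a \emph{strictly decreasing} sequence. Neither statement is right: a removed prefix of blocks $b',b',2b',\ldots,2^{s-1}b'$ sums to $2^{s}b'$, which is \emph{at least} $b'$; and since $k$ only shrinks by a constant factor $5/6$ per iteration, the rounded block size $b'$ may stay the same across consecutive iterations, so the sequence of removed powers need not be strictly decreasing across iterations either. Fortunately, the conclusion does not rely on distinctness at all: $d_0=0$ has zero ones, and for any $d$ and any power $2^j$, adding $2^j$ to $d$ increases the number of binary ones by at most one (a run of carries only ever clears ones and sets a single new bit). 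This elementary observation gives $d_i\le i$ ones directly, which is all that is needed to invoke Lemma~\ref{L-weight}. Everything else in your proposal is correct and matches the paper.
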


\section{Concluding remarks}\label{S-concl}

We used soft heaps to obtain a very simple $O(k)$ algorithm for selecting the $k$-th smallest item from a binary min-heap, greatly simplifying the previous $O(k)$ algorithm of Frederickson \cite{Frederickson93}. We used this simple heap selection algorithm to obtain simpler algorithms for selection from row-sorted matrices and from $X+Y$, simplifying results of Frederickson and Johnson \cite{FrJo82}. The simplicity of our algorithms allowed us to go one step further and obtain some improved algorithms for these problems, in particular an $O(m+\sum_{i=1}^m \log(k_i+1))$ ``output-sensitive'' algorithm for selection from row-sorted matrices.

Our results also demonstrate the usefulness of soft heaps outside the realm of minimum spanning tree algorithms. It would be nice to find further applications of soft heaps.

\bibliographystyle{plain}
\bibliography{data-struc-short,selection}

\end{document}